\documentclass[11pt]{article}
\usepackage{hfstyle}
\usepackage{apacite}
\usepackage{amsmath}
 \usepackage{amsthm}
\usepackage{bbm}
\usepackage{graphicx}
\newcommand{\ZZ}{{\mathbbm{Z}}}
\newcommand{\NN}{{\mathbbm{N}}}

 \newtheorem{prop}{Proposition}
 \newtheorem{lemma}{Lemma}
\newcommand{\card}{\mathop{\mathrm{card}}}
\newcommand{\f}{{\mathbf f}}
\newcommand{\g}{{\mathbf g}}
\newcommand{\B}{{\mathcal B}}
\newcommand{\T}{{\mathcal T}}
\newcommand{\G}{{\mathcal N}}
\newcommand{\calS}{{\mathcal S}}

\begin{document}
\title{Response curves for cellular automata in one and two dimensions -- an example of rigorous
calculations}
\author{Henryk Fuk\'s and Andrew Skelton
      \oneaddress{
         Department of Mathematics\\
         Brock University\\
         St. Catharines, Ontario  L2S 3A1, Canada\\
         \email{hfuks@brocku.ca, andrewskelton123@msn.com}
       }
   }

\Abstract{
We consider the problem of computing a response curve for binary cellular automata -- that is, the curve describing the 
dependence of the density of ones after many iterations of the rule on the initial density of ones. We demonstrate
how this problem could be approached using rule 130 as an example. For this rule, preimage sets of finite strings exhibit
recognizable patterns, and it is therefore possible to compute both cardinalities of preimages of certain finite strings
and probabilities of occurrence of these strings in a configuration obtained by iterating a random initial configuration
$n$ times. Response curves can be rigorously calculated in both one- and two-dimensional versions of 
CA rule 130. We also discuss a special case of totally disordered initial configurations, that is, random configurations
where the density of ones and zeros are equal to 1/2.
}
\maketitle

\section*{Introduction}
Cellular automata (CA) can be viewed as computing devices, which take as an input
some initial configuration. The CA rule is iterated a number of times starting from this
configuration, resulting in a final configuration, which constitutes the output of the computation.
If the CA rule is complex, the above computation may be very difficult to characterize,
let alone understand in detail. If one considers ``simple'' CA rules, however, one can
say quite a lot about the process, as we shall see in what follows.

In many practical problems, e.g., in mathematical modelling, one wants to know
how a CA rule iterated over an initial configuration affects certain aggregate properties
of the configuration, such as, for example, the density of ones. If we take a randomly generated
initial configuration with a given density of ones, and iterate a given rule $n$ times 
over this configuration, what is the density of ones in the resulting configuration?
Using signal processing terminology, we want to know  the ``response curve'',  density of the
output as a function of the density of the input. Response curves appear in computational problems,
and a classical example of such a problem in CA theory is
the so-called density classification problem (DCP). If we denote the density of ones in the configuration
at time $n$ by $c_n$, the DCP asks us to find a rule for which $c_\infty=1$ if $c_0>1/2$ and
$c_\infty=0$ if $c_0<1/2$ -- that is, CA rule with  density response curve in a form of a step function.
Since it is known that such a rule does not exist \cite{LB95}, once could ask a related general question: which response
curves are possible in CA rules? Obviously, this problem is much more difficult than DCP, and not much
is known about it. We propose to approach this problem from an opposite direction: given the CA rule,
what can we say about its response curve? It turns out that in surprisingly many cases, the response
curve can be calculated exactly, providing that preimage sets of finite strings under the CA rule
exhibit recognizable patters. We will demonstrate this technique using as an example elementary CA rule 130,
in both one- and two-dimensional spaces.

\section*{Basic definitions}
Let $\G=\{0,1\}$ be called {\em a symbol set}, and let $\calS(\G)$
 be the set of all bisequences over $\G$, where by a bisequence we mean a
 function on  $\ZZ$ to $\G$. Throughout the remainder of this
text the configuration space  $\calS(\G)=\{0,1\}^{\ZZ}$ will be simply denoted by $\calS$.

{\em A block of length} $n$ is an ordered set $b_{0} b_{1}
\ldots b_{n-1}$, where $n\in \NN$, $b_i \in \G$.
Let $n\in \NN$ and let
$\B_n$ denote the set of all blocks of length $n$ over $\G$ and $\B$ be
the set of all finite blocks over $\G$.

For $r \in \NN$, a mapping $f:\{0,1\}^{2r+1}\mapsto\{0,1\}$ will be called {\em a cellular
 automaton rule of radius~$r$}. Alternatively, the function $f$ can be
 considered as a mapping of $\B_{2r+1}$ into $\B_0=\G=\{0,1\}$. 

Corresponding to $f$ (also called {\em a local mapping}) we define a
 {\em global mapping}  $F:\calS \to \calS$ such that
$
(F(s))_i=f(s_{i-r},\ldots,s_i,\ldots,s_{i+r})
$
 for any $s\in \calS$.
The {\em composition of two rules} $f,g$ can be now defined in terms
of
their corresponding global mappings $F$ and $G$ as $
(F\circ G)(s)=F(G(s)),
$
where $s \in \calS$. 

A {\em block evolution operator} corresponding to $f$ is a mapping
 $\f:\B \mapsto \B$ defined as follows. 
Let $r\in \NN$ be the radius of $f$, and let  $a=a_0a_1 \ldots a_{n-1}\in \B_{n}$
where $n \geq 2r+1 >0$. Then 
\begin{equation}
\f(a) = \{ f(a_i,a_{i+1},\ldots,a_{i+2r})\}_{i=0}^{n-2r-1}.
\end{equation}
 Note that if
$b \in B_{2r+1}$ then $f(b)=\f(b)$.

We will consider the case of $\G=\{0,1\}$ and $r=1$ rules,
 i.e., {\em elementary cellular automata}. In this case, when $b\in\B_3$,
then $f(b)=\f(b)$. The set 
$\B_3=\{000,001,010,011,100,101,110,111\}$ will be called the set of \textit{basic blocks}.

The number of $n$-step preimages of the block $b$ under the rule $f$
is defined as the number of elements of the set $\f^{-n}(b)$.
Given an elementary rule $f$, we will be especially interested in
the number of $n$-step preimages of basic blocks
under the rule $f$.

As mentioned in the introduction, we will use as an example rule 130 (using Wolfram's numbering scheme \cite{Wolfram94}) with local function defined as 
\begin{equation}
f\big(x, y, z\big) 
= \left\{ 
\begin{array}{l l}
  1 & \quad \text{if} \quad ( x \; y \; z) = ( 0 \; 0 \; 1) \text{ or } (1 \;1\;1),\\
 0 & \quad \text{otherwise.}\\
\end{array} \right.
\end{equation}
From now on, both the local function $f$ and the corresponding block evolution operator $\f$ will refer to rule 130, unless otherwise noted. 
We will calculate the response curve for this rule by considering the structure of preimages of finite strings, using the approach described
in earlier papers \cite{paper27,paper34,paper39}.
\section*{Structure of preimage sets in one dimension}
Since the only preimages of $1$ under rule 130 are blocks $111$ and $001$, let us consider first their preimages, that is, the
sets $\f^{-n}(111)$ and $\f^{-n}(001)$.  The following two propositions describe these sets.
\begin{prop}  \label{preim111-1d}
The set $\f^{-n}(111)$ has only one element, namely the block 
$\underbrace{11\ldots 1}_{2n+3}$, hence
$\card \f^{-n}(111)=1$.
\end{prop}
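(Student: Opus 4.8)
The plan is to prove the statement by induction on $n$, after isolating the one key combinatorial fact about the local rule. First I would record the trivial direction: since $f(1,1,1)=1$, the operator $\f$ sends any all-ones block to a shorter all-ones block, so applying $\f$ exactly $n$ times to $\underbrace{1\cdots1}_{2n+3}$ yields $111$. This shows the displayed block does belong to $\f^{-n}(111)$, and it remains only to show that no other block does.

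For the induction I would use the reduction $\f^{-(n+1)}(111)=\f^{-1}\big(\f^{-n}(111)\big)$, which follows from $\f^{n+1}=\f\circ\f^{n}$. Granting the inductive hypothesis that $\f^{-n}(111)$ is the singleton $\{\underbrace{1\cdots1}_{2n+3}\}$, the problem collapses to a single one-step preimage computation: I must show that the only block of length $k+2$ mapped by $\f$ onto the all-ones block of length $k$ is itself all ones, for $k=2n+3$. Since $2(n+1)+3=(2n+3)+2$, this closes the induction, the base case $n=0$ being the tautology $\f^{-0}(111)=\{111\}$.

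The heart of the argument is therefore the following one-step claim, which I would prove for every $k\ge 2$. A block $a_0a_1\cdots a_{k+1}$ maps to the all-ones block of length $k$ precisely when $f(a_i,a_{i+1},a_{i+2})=1$ for each $i=0,\ldots,k-1$, and since the only preimages of $1$ are $001$ and $111$, every window $(a_i,a_{i+1},a_{i+2})$ must equal $001$ or $111$. Both of these strings end in $1$, so reading off the last coordinate of each window forces $a_2=a_3=\cdots=a_{k+1}=1$. The crux is then to exclude the $001$ option entirely: a window equal to $001$ would place a $0$ in the coordinate $a_{i+1}$, and for $i\ge 1$ this coordinate lies in the all-ones interior just established, a contradiction; the remaining boundary window at $i=0$ would need $a_1=0$, which is impossible once the window at $i=1$ has been forced to $111$ and hence $a_1=1$. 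Thus every window is $111$, which immediately gives $a_0=a_1=1$ as well, so the block is all ones.

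I expect the only delicate point to be the bookkeeping at the left boundary: a reader might worry that the initial window could be $001$, mimicking the genuine preimage of the single symbol $1$. The resolution is precisely the overlap structure, since the interior coordinates forced to $1$ by the trailing-coordinate argument leave no room for a $0$, so the $001$ pattern cannot survive. Once this propagation is stated carefully the rest is routine, and the uniqueness, together with the explicit preimage from the first paragraph, yields $\card\f^{-n}(111)=1$.
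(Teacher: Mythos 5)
Your proof is correct, and its skeleton is the same as the paper's: both reduce the $n$-step statement to the one-step fact that the only $\f$-preimage of an all-ones block is the all-ones block two cells longer, and then induct. The difference lies in how that one-step fact is established. The paper verifies it by exhaustively checking all $32$ strings in $\B_5$ to find that $11111$ is the only length-$5$ preimage of $111$, and then appeals to the overlap structure of all-ones blocks; you avoid enumeration entirely, working at the level of length-$3$ windows and using only the definitional fact that $f^{-1}(1)=\{001,111\}$. Your propagation step — trailing coordinates force $a_2=\cdots=a_{k+1}=1$, which excludes a $001$ window in the interior, and the forced window $(a_1,a_2,a_3)=111$ excludes it at the left boundary — is exactly the bookkeeping the paper compresses into ``the proof by induction follows,'' so your version is more self-contained and makes the boundary case explicit, while the paper's is shorter at the price of an unilluminating finite check. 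One cosmetic remark: the identity $\f^{-(n+1)}(111)=\f^{-1}\big(\f^{-n}(111)\big)$ is the preimage form of $\f^{n+1}=\f^{n}\circ\f$ rather than of $\f^{n+1}=\f\circ\f^{n}$; since the two compositions agree, this does not affect anything.
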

\begin{proof}
Exhaustively checking all 32 potential preimages of $111$ one can show that the only length-5 string~$b$ such that $\f(b)=111$ is $11111$. Since any block  of 1's consists of overlapping blocks \(111\), its preimage also must consist entirely of ones, and the proof by induction follows.
\end{proof}
\begin{prop} \label{preim001-1d}
The set $\f^{-n}(001)$  consists of all blocks of the form
\[ \underbrace{\star \hdots \star}_{2n-2i}\; 1\; 0\; 1 \;\underbrace{1 \hdots 1}_{2i} \quad \text{(if \(i\) is odd)} \quad\quad \text{or} \quad\quad \underbrace{\star \hdots \star}_{2n-2i}\; 0\; 0\; 1 \;\underbrace{1 \hdots 1}_{2i} \quad \text{(if \(i\) is even),}\]
where $i\in \{0\dots n\}$ and \(\star\) denotes an arbitrary value in \(\mathcal{S}\).
\end{prop}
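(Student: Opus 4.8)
The plan is to recast the assertion in a form better suited to induction. For $i\ge 0$ write $w_i$ for the length-$(2i+3)$ block $c_i\,1^{2i}$ (that is, $c_i$ followed by $2i$ ones), where $c_i=001$ when $i$ is even and $c_i=101$ when $i$ is odd. The claimed set is then exactly the set of blocks of length $2n+3$ that end in $w_i$ for some $i\in\{0,\dots,n\}$, the preceding $2n-2i$ symbols being arbitrary. This reformulation is legitimate because the maximal run of trailing $1$'s of $w_i$ has length $2i+1$, which is odd and strictly increasing in $i$; hence $i$ is recovered unambiguously from any block of this form and the families for distinct $i$ are disjoint. Denoting this set by $P_n$, I would prove $\f^{-n}(001)=P_n$ by induction on $n$, the base case $n=0$ being immediate since $\f^{-0}(001)=\{001\}=\{w_0\}=P_0$.

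For the inductive step, since $\f^{-(n+1)}(001)=\{s:\f(s)\in\f^{-n}(001)\}$ and $\f^{-n}(001)=P_n$ by hypothesis, it suffices to establish the single-step equivalence $\f(s)\in P_n\iff s\in P_{n+1}$ for blocks $s$ of length $2n+5$. The implication $s\in P_{n+1}\Rightarrow\f(s)\in P_n$ is the routine direction: using the restatement $f(x,y,z)=1\iff(z=1\text{ and }x=y)$, a direct computation shows that the last $2i+1$ output symbols of $\f(s)$ depend only on the suffix $w_i$ of $s$ and reproduce $\f(w_i)=w_{i-1}$ for $i\ge1$, while for $i=0$ one checks separately that a block ending in $001$ is sent to a block ending in $001$. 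Hence $s$ ending in $w_i$ forces $\f(s)$ to end in $w_{\max(i-1,0)}$, and since $i\le n+1$ gives $\max(i-1,0)\le n$, we get $\f(s)\in P_n$.

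The heart of the argument, and the step I expect to be the main obstacle, is the converse $\f(s)\in P_n\Rightarrow s\in P_{n+1}$, which amounts to determining the forced suffix of an arbitrary preimage. Writing $t=\f(s)$, I would propagate the constraints $f(s_j,s_{j+1},s_{j+2})=t_j$ from right to left through the suffix of $t$, using that $t_j=1$ forces $s_{j+2}=1$ and $s_j=s_{j+1}$, that $t_j=0$ with $s_{j+2}=1$ forces $s_j\ne s_{j+1}$, and that $t_j=0$ with $s_{j+2}=0$ leaves $s_j$ free. Scanning the $2i+1$ trailing ones of $w_i$ forces a run of $2i+3$ ones in $s$; the middle $0$ of $c_i$ then forces a $0$; and the leading symbol of $c_i$ forces precisely the symbol that makes the emerging suffix of $s$ equal to $w_{i+1}$ (the parity bookkeeping $c_{i+1}[0]=1-c_i[0]$ is exactly what must be verified here). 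The delicate point is the case $i=0$, where the trailing run is a single $1$ and the second symbol of $w_0$ is not forced: the propagation then branches, according to a free choice, into a suffix $w_1$ or a suffix $w_0$, and one must check that both branches keep $s$ in $P_{n+1}$. In every case $s$ ends in some $w_{i'}$ with $i'\le i+1\le n+1$, and as the remaining symbols of $s$ are unconstrained, $s\in P_{n+1}$, which closes the induction.
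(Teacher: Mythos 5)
Your proof is correct and takes essentially the same route as the paper's: induction on $n$, with the inductive step carried out by right-to-left propagation of forced symbols in the preimage, the same parity bookkeeping for the leading symbol, and the same special treatment of the branching case $i=0$. Your write-up is somewhat more explicit than the paper's (which works out only the case of even $i\geq 2$ in detail and omits the remaining cases), but the underlying argument is identical.
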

\begin{proof}
By induction, if \(n=1\), then \(i \in \{0,1\}\), and our formula provides us with the following two  types of preimages,
$\star \star 001$ or $10111$. One can exhaustively check all 32 blocks of length five to verify that these are the only five blocks $b$ such that $\f(b)=001$.

Now, let us assume that we have the following \(n\)-step preimage:
\[\underbrace{\star \hdots \star}_{2n-2i}\; 0\; 0\; 1 \;\underbrace{1 \hdots 1}_{2i}\quad \quad\text{where \(i\) even and } i \in \{ 2\hdots n\} . \]
We find the preimages of this string, starting from the right and working toward the left. By considering all 16 blocks of length four, we can see that the only string that  has image  \(11\) under $\f$ is \(1111\), so we start with the following preimage (written above the string we are considering, with arrow indicating direction of proceeding):
\[\begin{tabular}{ccccccccccccccc}
& & & & &  &  &  &  & \(\leftarrow\) & 1& 1 & 1 & 1 \\
 \(\star\) & \(\star\) & \(\hdots\) & \(\star\) & \(\star\)& 0 & 0 & 1 &1 & \(\hdots\) & 1 & 1&1  \\
\end{tabular}\]
Continuing to the left, using the rule table of rule 130 we can construct the preimage up to the
following point:
\[\begin{tabular}{ccccccccccccccc}
& & &\(\leftarrow\) & 1& 0 & 1 & 1 & 1 &\(\hdots\) & 1& 1 & 1 & 1 \\
 \(\star\) & \(\star\) & \(\hdots\) & \(\star\) & \(\star\)& 0 & 0 & 1 &1 & \(\hdots\) & 1 & 1&1 \end{tabular}\]
From here to the left, all the remaining entries in the preimage are arbitrary:
 \[\begin{tabular}{ccccccccccccccc}
\(\star\) & \(\star\) & \(\hdots\)&\(\star\) & \(\star\) & 1& 0 & 1 & 1 & 1 &\(\hdots\) & 1& 1 & 1 & 1 \\
& \(\star\) & \(\star\) & \(\hdots\) & \(\star\) & \(\star\)& 0 & 0 & 1 &1 & \(\hdots\) & 1 & 1&1 \end{tabular}\]
We now have the following \((n+1)\)-step preimage:
 \[ \underbrace{\star \hdots\hdots\hdots \star}_{2(n+1)-2(i+1)}\; 1\; 0\; 1 \;\underbrace{1 \hdots 1}_{2(i+1)} \quad \quad \text{where \((i+1)\) is odd and }
 (i+1) \in \{ 3\hdots (n+1)\}. \]
  To finish the proof, we need to perform similar analysis for two other cases as follows.
  \begin{itemize}
  \item When $i=0$, a similar argument is used to find all $(n+1)$-step preimages in which $i=0,1$.
  \item When $i \in \{1,\dots n\}$ and odd, another similar argument can be used to find all $(n+1)$-step preimages in which $(i+1)$ is even.
  \end{itemize}
  We omit the details, but note that all possible $(n+1)$-step preimages are accounted for in the analysis, thus completing the induction step.
\end{proof}
We may now proceed with enumerating the elements of \(f^{-n}(001)\).
\begin{prop} For rule 130 in one dimension, we have
\( \card \f^{-n}( 0 0 1)  = \displaystyle \frac{4^{n+1}-1}{3}\).
\end{prop}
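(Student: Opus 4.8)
The plan is to count the blocks described in Proposition~\ref{preim001-1d} directly, grouping them by the index $i$. For a fixed $i \in \{0,\dots,n\}$, the associated preimages all share the same fixed tail --- either $101\underbrace{1\cdots 1}_{2i}$ or $001\underbrace{1\cdots 1}_{2i}$, according to the parity of $i$ --- while the leading $2n-2i$ positions marked $\star$ range freely. Since each $\star$ is an independent binary choice, the number of preimages contributed by a given $i$ is exactly $2^{2n-2i} = 4^{n-i}$.

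Before summing, I would verify that these families are pairwise disjoint, so that no block is counted twice; this is the only step requiring any care. The key observation is that in each block the fixed triple has the form $x\,0\,1$ with $x \in \{0,1\}$. Reading from the right end, the block thus begins with the $2i$ trailing ones followed by the final $1$ of the triple, and the first $0$ encountered from the right sits in the middle position of the triple, i.e.\ at distance $2i+2$ from the right end. Because the $\star$ positions lie entirely to the left of the triple (at distance at least $2i+4$), this rightmost zero is unaffected by the free choices, and its location is a fixed function of the block. Since $i \mapsto 2i+2$ is injective, the location of the rightmost zero determines $i$ uniquely, so blocks arising from distinct values of $i$ are genuinely distinct.

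It then remains only to evaluate the resulting geometric sum:
\begin{equation}
\card \f^{-n}(001) = \sum_{i=0}^{n} 4^{\,n-i} = \sum_{j=0}^{n} 4^{\,j} = \frac{4^{n+1}-1}{3}.
\end{equation}
This computation is routine, so the genuine content of the argument is the disjointness check together with the per-$i$ count $4^{n-i}$; the combinatorial structure itself has effectively been supplied already by Proposition~\ref{preim001-1d}.
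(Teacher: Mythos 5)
Your proof is correct and follows essentially the same route as the paper's: group the preimages described in Proposition~\ref{preim001-1d} by the index $i$, count $2^{2n-2i}$ blocks per family, and sum the resulting geometric series to get $(4^{n+1}-1)/3$. The only difference is your explicit verification that the families are pairwise disjoint (via the fixed position of the rightmost zero), a point the paper leaves implicit.
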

\begin{proof}
According to Proposition (\ref{preim001-1d}), for each value of \(i \in \{0\dots n\}\), there are \(2n-2i\) arbitrary values
in the preimage, so that there are $2^{2n-2i}$ of such preimages. Summing over $i$ we obtain 
\begin{equation}
 \card[f^{-n}( 0 0  1)] = \sum_{i=0}^{n} 2^{2n-2i} = 4^n \sum_{i=0}^{n} 4^{-i} = \frac{4^{n+1}-1}{3}.
\end{equation}
\end{proof}

Similar reasoning as in Proposition (\ref{preim111-1d}) leads to
\begin{equation}
\card \f^{-n}( 1 1 0) = \card\f^{-n}( 1 0 1) = 1. 
\end{equation}
It turns out that we now have enough information to find the cardinality of the preimage sets of each of the four remaining basic blocks, since not all of them are independent. The interdependence between preimage sets can be easily understood
if one considers this problem using the language of probability theory. Suppose that we start with an infinite
configuration in which all sites independently assume the value 0 or 1 with equal probability $1/2$. If we iterate
our rule $n$ times starting with this initial condition, one can show \cite{paper39} that the probability $P_n(b)$
of occurrence of block $b$ in the final configuration is given by
\begin{equation}
 P_n(b) = 2^{-|b|-2n} \card \f^{-n}(b),
\end{equation}
where $|b|$  is the length of the block $b$. Probabilities of different blocks are not independent,
since the following consistency conditions \cite{Dynkin69}
hold:
\begin{align} 
P_n(000) + P_n(001) &= P_n(100) + P_n(000) = P_n(00), \label{P000} \\
 P_n(110) + P_n(111)  &= P_n(011) + P_n(111)= P_n(11), \label{P110} \\
 P_n(010) + P_n(110)  &= P_n(100) + P_n(101)= P_n(10). \label{P010}
  \end{align}
Equation (\ref{P000}) implies that \(P_n(100) = P_n(001)\), hence
\begin{equation} \card \f^{-n}( 1 0 0) = \card \f^{-n}( 0 0  1). \end{equation}

Likewise, equations (\ref{P110}) and (\ref{P010}) imply that
\begin{equation} \card \f^{-n}( 0 1 1) = \card \f^{-n}( 1 1 0), \end{equation}
\begin{equation}  \card \f^{-n}( 0 1 0) = \card \f^{-n}( 1 0  1)+\card \f^{-n}( 1 0 0) -\card \f^{-n}( 1 1 0). \end{equation}

Finally, since we know the total number of all preimages of all basic blocks, we may subtract the 
seven known formulae from the total to find $\card \f^{-n}( 0 0  0)$. The results are 
summarized in Table~\ref{preim1dtable}.
\begin{table} 
\centering  
 \begin{tabular}{|| c | c || c | c ||}
 \hline
  b  &  $\card  \f^{-N}( b)$ & $b$ & $\card[ f^{-n}( b)]$\\
  \hline
 \( 0\;0\;0\) & \(  4^{n+1}-3   \) & \(1\;0\;0\) & \(   \frac{4^{n+1}-1}{3}  \)\\ \hline
    \(0\;0\;1\) & \(   \frac{4^{n+1}-1}{3}  \)  &  \(1\;0\;1\) & \(  1   \) \\ \hline
 \(0\;1\;0\) & \(  \frac{4^{n+1}-1}{3}   \) & \(1\;1\;0\) & \(    1   \)  \\ \hline
  \(0\;1\;1\) & \(  1   \) & \(1\;1\;1\) & \(   1  \) \\ 
 \hline \hline
 \end{tabular}
\caption{Number of Preimages of Basic Blocks of One-Dimensional Rule 130}\label{preim1dtable}
\end{table}

\section*{Dependence on the initial density in one dimension}
Let us now define $\rho$ to be the probability that a given cell is in state 1 in the randomly generated initial configuration. Obviously, $1-\rho$ is then the
probability that a cell is in state 0. Since in the initial configuration all cells are independent, the probability that a given block \(b\) will occur in the initial configuration is
\begin{equation}
P_0(b) = \rho^{\# \text{ of 1's in } b} (1-\rho)^{\# \text{ of 0's in } b}.
\end{equation}
Thus, if we wish to find the density of ones after $n$ iterations of rule \(f\), we can use preimages of $1$ and write
\begin{equation}
P_n(1) = \sum_{b \in f^{-n}(1)} P_0(b).
\end{equation}
Detailed derivation and discussion of the above equation can be found in \cite{paper39}, and will not
be repeated here.
Now, since the only preimages of $1$ are $001$ and $111$, one obtains
\begin{equation}
P_n(1) = P_{n-1} (111) + P_{n-1}(001)=
\sum_{b \in \f^{-n+1}(111)} P_0(b) + 
\sum_{b \in \f^{-n+1}(001)} P_0(b).
\end{equation}
We know the structure of the \(n\)-step preimage sets for Rule 130 in one dimension, thus we may compute desired probabilities in the above formula.
We know from Proposition \ref{preim111-1d}  that the preimage of \(111\) is comprised entirely of ones,
hence
\begin{equation}
P_n(111)=P_0(\underbrace{11\ldots 1}_{2n+3})=\rho^{2n+3}.
\end{equation}
From Proposition \ref{preim001-1d} we find,
\begin{align}
 P_n(001)&=\sum_{i \in \{0 \ldots n\}, i \,\,\text{is even}} (1-\rho)^2 \rho^{2i+1}+
\sum_{i \in \{0 \ldots n\}, i \,\,\text{is odd}} (1-\rho) \rho^{2i+2}\\
&=\sum_{k=0}^{\lceil \frac{n-1}{2}\rceil} (1-\rho)^2 \rho^{4k+1}+
\sum_{k=0}^{\lfloor \frac{n-1}{2}\rfloor} (1-\rho) \rho^{4k+4}\\
&=\frac {\rho\, \left( -  {\rho} ^{4 \lceil 
(n-1)/2 \rceil +4 }+  {\rho} ^{4
 \lceil (n-1)/2 \rceil +5} -  {\rho}^{ 4 \lfloor (n+1)/2 \rfloor +3}+{\rho}^{3
} - \rho + 1\right) }{{\rho}^{3}+{\rho}^{2}+\rho+1}.
\end{align}
Using the fact that $P_n(1) = P_{n-1} (111) + P_{n-1}(001)$, we obtain the equation of the \textit{response curve}, that
is, dependence of the density of ones after $n$ steps, denoted by $c_n=P_n(1)$, on  the initial density, denoted by $\rho$:
\begin{equation}
c_n=\rho^{2n+1}+
\frac {\rho\, \left( -  {\rho} ^{4 \lceil 
(n-2)/2 \rceil +4 }+  {\rho} ^{4
 \lceil (n-2)/2 \rceil +5} -  {\rho}^{ 4 \lfloor n/2 \rfloor +3}+{\rho}^{3
} - \rho + 1\right) }{{\rho}^{3}+{\rho}^{2}+\rho+1}.
\end{equation}
By taking the limit of $n \to \infty$, we obtain the asymptotic response curve,
\begin{equation} \label{densityesponse1d}
  c_\infty=
 \left\{ \begin{array}{ll}
 \displaystyle \frac{ \rho(\rho^3 - \rho + 1)}{\rho^3 + \rho^2 + \rho + 1}  & \mbox{if $\rho<1$},  \vspace{10pt} \\
 1    & \mbox{if $\rho=1$}.
\end{array}
\right.
\end{equation}
We performed computer simulations to illustrate the formula (\ref{densityesponse1d}) for the the dependence of $c_\infty$ on the initial density $\rho$. We considered an initial configuration of \(5000\) cells, and varied the initial density from 0 to 100\(\%\), increasing it by a step size of 1\(\%\), iterating rule 130 until we reached a fixed density. Results were then averaged over \(20\) runs for each initial density. The results are presented in Figure \ref{respurvefig1d}. As we can see,
the response curve calculated above for infinite configurations agrees very well with simulations performed
on a finite lattice. 
\begin{figure}
\begin{center}
\includegraphics[scale=0.4]{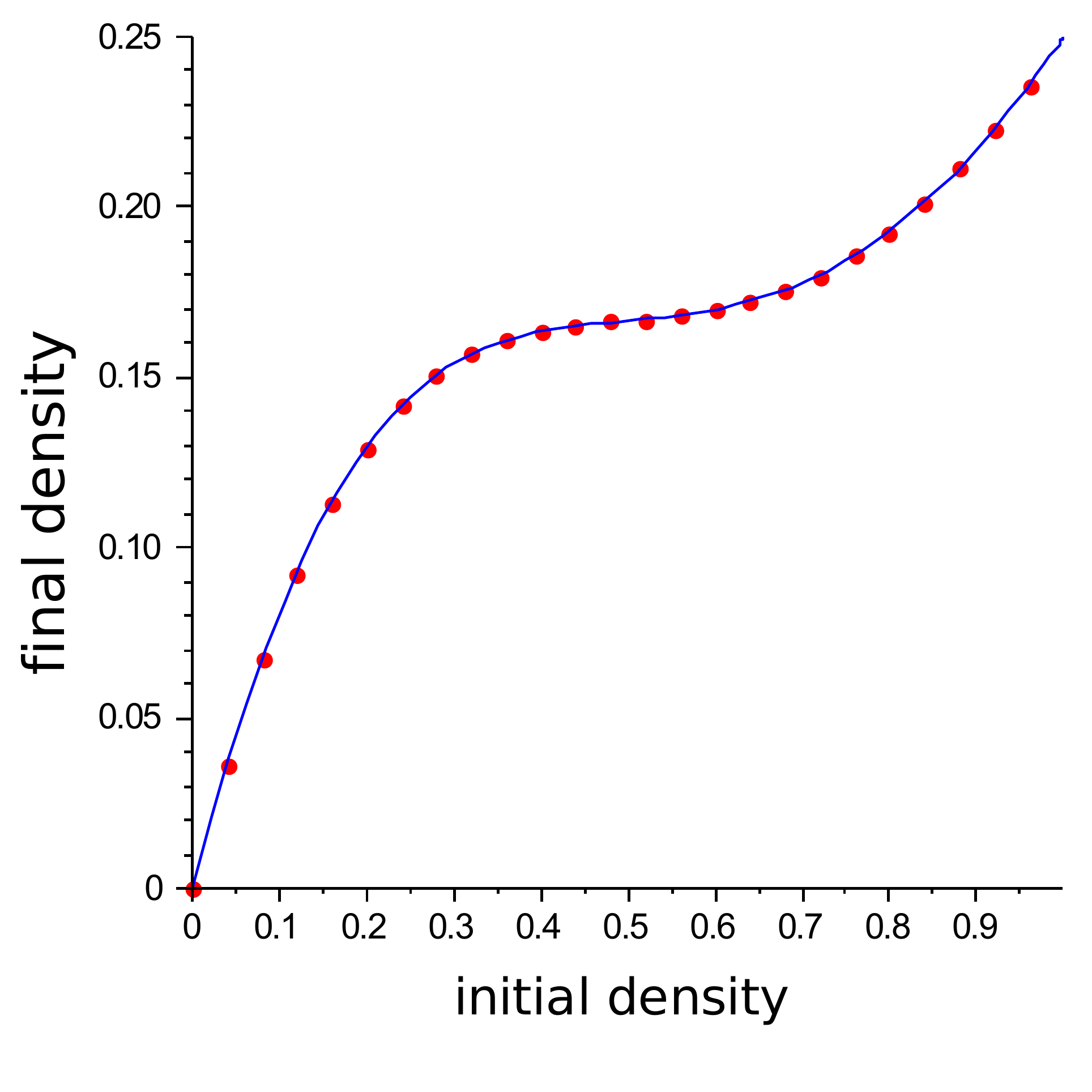}	
\end{center}
\caption{Plots of the response curve, i.e., dependence of $c_\infty$ on $\rho$ for one-dimensional rule 130. Circles  correspond to computer simulations using a lattice of 5000 sites, as described in the text, while
the continuous line represents theoretical curve.}\label{respurvefig1d}
\end{figure}
As a final remark in this section, let us note that for the special case of $\rho=1/2$ we obtain
\begin{equation}\label{dens1d}
 c_n = \frac{1}{6} + \frac{1}{3} 4^{-n}. \end{equation}
We can see that the convergence toward $c_\infty=1/6$ is exponential.

We will now proceed to consider two-dimensional version of rule 130. As it turns out, the basic ideas used in the preceding section for calculating response curve 
in one dimension can be carried over to two dimensions.
\section*{Basic definitions in two dimensions}
All basic definitions introduced at the beginning of this paper  can be easily generalized to two dimensions. We will consider a variant of rule 130 in
which the state of the cell at the next iteration depends on the cell itself, its right neighbour and the top neighbour. We will call this type of neighbourhood
an $L$-shaped neighbourhood. 
 To be more precise, 
let us define the local mapping of a \textit{two-dimensional cellular automaton with $L$-shaped neighbourhood} as 
\begin{equation}
g: (\begin{smallmatrix} x \\ y & z \end{smallmatrix}) \to \{0,1\},
\end{equation}
 where \(x,y,z  \in G\).
 Similarly as in one dimension, $g$ has a corresponding global mapping, $G: \{0,1\}^{\ZZ ^2} \to \{0,1\}^{\ZZ ^2}$
such that 
\begin{equation}
\big(G(s)\big)_{i,j} = g \big(\begin{smallmatrix} s_{i,j+1} \\ s_{i,j} & s_{i+1,j} \end{smallmatrix}\big) 
\end{equation}
for any
$s \in \{0,1\}^{\ZZ ^2}$. Blocks in two dimensions will be defined as regions of 2D lattice  in the shape of isosceles right  triangles. 
We define the set of triangular blocks of size $r$  as the set consisting of elements
\begin{equation}\label{blockb}
\begin{tabular}{c @{\hspace{1mm}} c @{\hspace{1mm}} c} 
\\ $b_ {1,r}$\\ \(\vdots\) & \(\ddots\) \\ $b_{1,1}$ & \(\hdots\) & $b_{r,1}$,
 \end{tabular} 
\end{equation}
where each $b_{i,j}\in \{0,1\}$. This set will be denoted by $\T_r$. The block evolution operator $\g: \T_r \to \T_{r-1}$ will be defined as a function
which transforms triangular block (\ref{blockb}) into another block
\begin{equation}\label{blockc}
\begin{tabular}{c @{\hspace{1mm}} c @{\hspace{1mm}} c} 
\\ $c_ {1,r-1}$\\ \(\vdots\) & \(\ddots\) \\ $c_{1,1}$ & \(\hdots\) & $c_{r-1,1}$,
 \end{tabular} 
\end{equation}
where $c_{i,j}=g \big(\begin{smallmatrix} b_{i,j+1} \\ b_{i,j} & b_{i+1,j} \end{smallmatrix}\big)$ for 
$i \in \{1,\ldots, r-1 \}$, $j \in \{1, \ldots, r-i \}$. 

Finally, rule 130 is defined in two dimensions as
\begin{equation}
g \big(\begin{smallmatrix} x &\\ y & z \\ \end{smallmatrix}\big)
= \left\{ 
\begin{array}{l l}
  1 & \quad \text{if} \quad \big(\begin{smallmatrix} x &\\ y & z \\ \end{smallmatrix}\big) = \big(\begin{smallmatrix} 0 &\\ 0 & 1 \\ \end{smallmatrix}\big) \text{ or } \big(\begin{smallmatrix}1&\\ 1 & 1 \\ \end{smallmatrix}\big),\\
 0 & \quad \text{otherwise.}\\
\end{array} \right.
\end{equation}
Since the only preimages of $1$ are triangular blocks $\left(\begin{smallmatrix} 0 &\\0 & 1 \\ \end{smallmatrix}\right)$ and
$\left(\begin{smallmatrix} 1 &\\1 & 1 \\ \end{smallmatrix}\right)$, we need, similarly as in one dimension, to analyze structure of preimage sets
$\g^{-n}(\begin{smallmatrix} 0 &\\0 & 1 \\ \end{smallmatrix})$ and $\g^{-n}(\begin{smallmatrix} 1 &\\1 & 1 \\ \end{smallmatrix})$.
\section*{Structure of preimage sets in two dimensions}
\begin{prop}\label{preim001-2d}
 The set $\g^{-n}(\begin{smallmatrix} 0 &\\0 & 1 \\ \end{smallmatrix})$ consists of all blocks of the following form:
\begin{equation} \label{preim001-2d-block}
 \begin{tabular}{c @{\hspace{2mm}} c @{\hspace{2mm}} c @{\hspace{+2mm}} c @{\hspace{+2mm}} c @{\hspace{2mm}} c @{\hspace{2mm}} c @{\hspace{2mm}} c} 
\\ $\star$
\vspace{-0.5mm} \\ $\vdots$ & $\ddots$ 
\vspace{-0.5mm} \\ $\vdots$ & & $\star$
\vspace{-0.5mm} \\ $\vdots$ & & $\vdots$ & $a_{1}$ 
\vspace{-0.5mm} \\ $\vdots$ & & $\vdots$ & $a_{2}$ & 1 
\vspace{-0.5mm} \\ $\vdots$ & & $\vdots$ & $a_{3}$ & $\vdots$ & $\ddots$
\vspace{-0.5mm}\\ $\vdots$ & & $\vdots$ & $\vdots$ & $\vdots$ & &$\ddots$ 
\vspace{-0.5mm} \\ $\star$ & $\hdots$ & $\star$ & $a_{i+2}$ & 1 & $\hdots$ & $\hdots$ & 1
\\ \multicolumn{3}{c}{$\underbrace{\;\;\;\;\;\;\;\;\;\;\;\;\;}_\text{n-i}$}  & & \multicolumn{4}{c}{$\underbrace{\;\;\;\;\;\;\;\;\;\;\;\;\;\;\;\;\;}_\text{i+1}$} 
 \end{tabular}  
\end{equation}
where $i \in \{0 \dots n\}$ and for $a_3 \ldots a_{i+2}$ we take arbitrary values. If $i=0$, then $(a_1,a_2)=(0,0)$, while if $i>0$, $a_1,a_2$ are determined
by
\begin{eqnarray}
a_1&=&1+ \sum_{j=2}^{i+1}{\binom{i}{j-1}}a_j \,\,\,\mathrm{mod} \,2, \label{a1}\\
a_2&=&1+ \sum_{j=2}^{i+1}{\binom{i}{j-1}}a_{j+1} \,\,\,\mathrm{mod} \,2, \label{a2}
\end{eqnarray}
\end{prop}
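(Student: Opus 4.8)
The plan is to prove this by induction on $n$, closely paralleling the one-dimensional argument of Proposition~\ref{preim001-1d}, but now tracking a two-dimensional boundary rather than a single endpoint. The first step I would take is to rewrite the local rule algebraically: over $\mathrm{GF}(2)$ one has $g(x,y,z) = z(1+x+y)$, so that $\g$ sets $c_{p,q} = b_{p+1,q}(1 + b_{p,q+1} + b_{p,q}) \bmod 2$. Read as an inversion rule, a target value $c_{p,q}=1$ forces both the right neighbour $b_{p+1,q}=1$ and the vertical equality $b_{p,q+1}=b_{p,q}$, whereas $c_{p,q}=0$ leaves more freedom. This dichotomy is exactly what separates the forced all-ones region of a preimage from its arbitrary $\star$-region, and it is the engine of the whole argument.

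For the base case $n=0$ the operator $\g$ is the identity on $\T_2$, and the only block of the claimed form (with $i=0$, $(a_1,a_2)=(0,0)$) is precisely $\begin{smallmatrix} 0 &\\ 0 & 1 \\ \end{smallmatrix}$; one could equally anchor the induction at $n=1$ by exhausting the size-$3$ triangular blocks, mirroring the one-dimensional base case. For the inductive step I would assume that $\g^{-n}(\begin{smallmatrix} 0 &\\ 0 & 1 \\ \end{smallmatrix})$ is exactly the family (\ref{preim001-2d-block}), then compute $\g^{-1}$ of each member and check that the union over all admissible $i$ and all free values reproduces precisely the family for $n+1$. The key is the all-ones triangle: because $g(1,1,1)=1$, an all-ones triangular region in the image lifts, through the forced conditions $b_{p+1,q}=1$ and $b_{p,q+1}=b_{p,q}$, to an all-ones triangle one layer larger in the preimage, the two-dimensional analogue of Proposition~\ref{preim111-1d}. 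The cells bordering this region on the left form the new $a$-column, now carrying one additional entry, while the triangle further to the left stays entirely arbitrary; thus the count of $\star$-columns is preserved, the all-ones part grows by one, and the parameter advances from $i$ to $i+1$.

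The heart of the matter is then to show that the two forced entries $a_1,a_2$ obey the binomial formulas (\ref{a1})--(\ref{a2}). Inside the region where every right neighbour equals $1$, the rule collapses to the affine map $c_{p,q} = 1 + b_{p,q+1} + b_{p,q} \bmod 2$, and iterating this down the column is a linear recurrence of the schematic form $a^{(t+1)}_k = 1 + a^{(t)}_k + a^{(t)}_{k+1}$, whose solution is a convolution of the free column entries with a row of Pascal's triangle; after $i$ steps the coefficient multiplying $a_j$ is exactly $\binom{i}{j-1} \bmod 2$, with the constant term accumulating to the leading $1$. I would make this precise by solving the recurrence in closed form and evaluating it on the two consecutive windows of free entries ($a_2,\dots,a_{i+1}$ for $a_1$ and $a_3,\dots,a_{i+2}$ for $a_2$), fixed by the requirement that the apex of the target reduce to its two prescribed $0$'s. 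As in the one-dimensional proof, the cases $i=0$ and $i$ even versus $i$ odd must be handled separately to pin down $(a_1,a_2)$.

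I expect the main obstacle to be precisely this recurrence-to-binomial step together with the bookkeeping that matches preimage cells across the two triangle sizes: one must verify that the vertical-equality constraints propagating out of the all-ones region reach exactly the cells $a_1,a_2$ and no others, so that the remaining $a_3,\dots,a_{i+2}$ and the $\star$-triangle are genuinely free. Establishing completeness and disjointness -- that every size-$(n+3)$ preimage arises this way and that distinct choices of $i$ and of the free values give distinct blocks, the analogue of the ``all possible preimages are accounted for'' remark closing the one-dimensional proof -- is where the triangular geometry makes the two-dimensional argument more delicate than its one-dimensional counterpart.
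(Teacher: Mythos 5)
Your proposal is sound in outline, but it takes a genuinely different route from the paper's. You run a \emph{backward} induction on $n$: assume the level-$n$ family is exactly known, compute $\g^{-1}$ of every member, and check that the union over all $i$ and all free values reproduces the level-$(n+1)$ family. That is the two-dimensional transplant of the paper's own proof of Proposition~\ref{preim001-1d}. The paper's two-dimensional proof goes the other way: it verifies \emph{forward} that each block of the form (\ref{preim001-2d-block}) maps to $\begin{smallmatrix} 0 &\\0 & 1 \\ \end{smallmatrix}$ after $n$ steps --- the all-ones triangle shrinks to a single $1$, the $a$-column collapses under iterated $h(x,y)=x+y+1 \bmod 2$ to the pair $(0,0)$ precisely when (\ref{a1})--(\ref{a2}) hold, and the resulting corner block then travels left for the remaining $n-i$ steps --- and it settles completeness by the informal dichotomy, grounded in an exhaustive $n=1$ check, that the target block can only appear by moving left one cell per step or by being created in place. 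The algebraic engine is identical in both arguments: your recurrence-to-binomial step, solving $a_k^{(t+1)}=1+a_k^{(t)}+a_{k+1}^{(t)}$ by rows of Pascal's triangle mod $2$, is exactly the paper's Lemma~\ref{lemmaq}. What your route buys is a rigorous completeness statement, since every preimage is captured by construction at each level --- and this is precisely the weakest, most hand-waving point of the paper's sketch. What the paper's route buys is a lighter verification burden and transparent dynamics: one never has to invert $\g$ on blocks containing $\star$'s.

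Two points in your sketch need repair before it is a proof. First, it is not true that the forced conditions coming from an all-ones triangle in the image by themselves lift it to an all-ones triangle one layer larger: they force the right-shifted copy of the same triangle to be all ones and force the next column to the left to be \emph{constant}, but the constant can be $0$ or $1$ (already for a single $1$ in the image, both $\begin{smallmatrix} 0 &\\0 & 1 \\ \end{smallmatrix}$ and $\begin{smallmatrix} 1 &\\1 & 1 \\ \end{smallmatrix}$ are preimages). The constant is pinned to $1$ only by invoking the image's $a$-column, which by (\ref{a1})--(\ref{a2}) can never be identically zero when $i\geq 1$; when $i=0$ the bordering column of the image is $(0,0)$ and both constants survive. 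This dichotomy is not a nuisance but the heart of the induction: it is exactly what makes a parameter-$i$ block with $i\geq 1$ lift only to parameter $i+1$, while a parameter-$0$ block lifts to parameters $0$ and $1$, so that the union over $i\in\{0,\dots,n\}$ fills out $\{0,\dots,n+1\}$. Second, the even/odd case split you import from the one-dimensional proof does not occur in two dimensions: the formulas (\ref{a1})--(\ref{a2}) are uniform in $i$, with the dependence on $i$ entering through binomial coefficients mod $2$ rather than through parity, and only $i=0$ is exceptional.
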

Before we attempt the proof, we need to observe two facts. First of all, note that
\begin{equation}
 g(\begin{smallmatrix} x &\\y & 1 \\ \end{smallmatrix})=x+y+1 \mod 2,
\end{equation}
and let us define $h(x,y)=x+y+1 \mod 2$. Secondly, consider the following procedure.
We start with a binary sequence $a_1,a_2,...a_k$ and replace it by 
a sequence of pairs $h(a_1,a_2), h(a_2,a_3), \ldots, h(a_{k-1},a_k)$. This 
new sequence is obviously of length $k-1$. If we repeat this process $k-1$ times,
we will end up with just one number, to be denoted by $q(a_1,a_2,\ldots,a_k)$. The following
lemma can be easily proved by induction.
\begin{lemma} \label{lemmaq}
$\displaystyle q(a_1,a_2,\ldots,a_k)= 2^{k-1} -1 + \sum_{i=1}^{k} {\binom{k-1}{i-1}} a_i \mod 2$.
\end{lemma}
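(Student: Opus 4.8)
The plan is to prove the identity by induction on the length $k$ of the sequence, exploiting the recursive structure of the procedure that defines $q$. The crucial observation is that a single step of the procedure reduces a sequence of length $k+1$ to one of length $k$: if we set $b_i = h(a_i,a_{i+1}) = a_i + a_{i+1} + 1 \bmod 2$ for $i = 1,\ldots,k$, then $q(a_1,\ldots,a_{k+1}) = q(b_1,\ldots,b_k)$, the right-hand side being the very same procedure applied to a shorter sequence. This lets me peel off one layer and invoke the induction hypothesis.

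For the base case I would take $k=1$, where the procedure is applied zero times, so that $q(a_1) = a_1$, while the claimed formula yields $2^0 - 1 + \binom{0}{0} a_1 = a_1 \bmod 2$, as required. For the inductive step, assuming the formula holds for all sequences of length $k$, I would apply the induction hypothesis to $(b_1,\ldots,b_k)$ and then substitute $b_i = a_i + a_{i+1} + 1$. Working modulo $2$, the sum $\sum_{i=1}^{k}\binom{k-1}{i-1} b_i$ separates into three pieces: two shifted copies of the $a$-sequence weighted by $\binom{k-1}{i-1}$, plus a constant contribution $\sum_{i=1}^{k}\binom{k-1}{i-1} = 2^{k-1}$. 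After reindexing the shifted copies, the coefficient of a generic $a_j$ becomes $\binom{k-1}{j-1} + \binom{k-1}{j-2}$, which collapses to $\binom{k}{j-1}$ by Pascal's rule. Combining the constant $2^{k-1}$ with the $2^{k-1}-1$ carried over from the induction hypothesis gives exactly $2^k - 1$, so the formula for length $k+1$ drops out.

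The only delicate point is the bookkeeping at the ends of the range in the inductive step: the boundary terms $a_1$ and $a_{k+1}$ each occur in only one of the two shifted sums, so one must check separately that they receive the coefficients $\binom{k}{0} = 1$ and $\binom{k}{k} = 1$ predicted by the target formula, rather than a Pascal combination. Everything else is a routine rearrangement using Pascal's identity and the binomial sum identity, so I expect this boundary verification to be the main — though ultimately minor — obstacle.
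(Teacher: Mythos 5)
Your proof is correct, but it takes a genuinely different route from the paper's. You peel off the \emph{first} layer of the procedure: setting $b_i = h(a_i,a_{i+1})$, you use the recursion $q(a_1,\ldots,a_{k+1}) = q(b_1,\ldots,b_k)$, which is immediate from the definition of $q$, and then push the substitution $b_i = a_i+a_{i+1}+1$ through the induction hypothesis, collecting terms with Pascal's identity together with the binomial sum $\sum_{i=1}^{k}\binom{k-1}{i-1}=2^{k-1}$. The paper instead splits \emph{horizontally}: it observes that the two values present just before the final step of the procedure are exactly $q(a_1,\ldots,a_k)$ and $q(a_2,\ldots,a_{k+1})$, so that
\begin{equation*}
q(a_1,\ldots,a_{k+1}) \;=\; q(a_1,\ldots,a_k)+q(a_2,\ldots,a_{k+1})+1 \bmod 2,
\end{equation*}
and applies the induction hypothesis twice, to the prefix and to the suffix. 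Your recursion costs nothing to justify (it is literally the definition of $q$), at the price of an extra substitution and the binomial sum identity; the paper's recursion requires the easy but not purely definitional observation that the two shorter computations are nested inside the longer one, but then the constants combine immediately ($2^{k-1}-1+2^{k-1}-1+1=2^k-1$) and the two shifted copies of the $a$-sum appear without any substitution. The boundary issue you flag as the main obstacle is handled identically in both arguments, via $\binom{k}{0}=\binom{k}{k}=1$: in your version $a_1$ carries coefficient $\binom{k-1}{0}=1=\binom{k}{0}$ and $a_{k+1}$ carries $\binom{k-1}{k-1}=1=\binom{k}{k}$, so it is indeed a minor check, no worse than in the paper. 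Your base case $k=1$ (where $q(a_1)=a_1$) is also legitimate and slightly more economical than the paper's base case $k=2$.
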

\begin{proof}
When $k=2$, one obviously has $q(a_1,a_2)=h(a_1,a_2)=a_1+a_2+ 1 \mod 2$, 
and 
\begin{equation}
 2^{2-1} - 1 + \sum_{i=1}^{2} {\binom{1}{i-1}} a_i \mod 2 =1+ a_1+a_2 \mod 2,
\end{equation}
thus the lemma is indeed true for $k=2$.

 Suppose now that it holds for a given $k$, and let us consider a 
binary sequence  $a_1,a_2,\ldots, a_{k+1}$. We clearly have
\begin{equation}
 q(a_1,a_2,\ldots,a_{k+1})=q(a_1,a_2,\ldots,a_k) + q(a_2,a_2,\ldots,a_{k+1}) + 1 \mod 2,
\end{equation}
hence
\begin{align}
 q(a_1,a_2,\ldots,a_{k+1})& = 2^{k-1}-1 + \sum_{i=1}^{k} {\binom{k-1}{i-1}} a_i 
+ 2^{k-1}-1 + \sum_{i=2}^{k+1} {\binom{k-1}{i-2}} a_i  + 1 \mod 2 \\
&=2^k-1 + a_1 + \sum_{i=2}^{k} {\binom{k-1}{i-1}} a_i + a_{k+1}
+ \sum_{i=2}^{k} {\binom{k-1}{i-2}} a_i \mod 2.
\end{align}
Using Pascal's identity, this becomes
\begin{equation}
 q(a_1,a_2,\ldots,a_{k+1}) = 2^k-1 + a_1 + a_k   + \sum_{i=2}^{k} {\binom{k}{i-1}}    \mod 2.
\end{equation}
Using $\binom{k}{0} = \binom{k}{k}=1$ we can incorporate $a_1$ and $a_{k+1}$ into the sum, obtaining
\begin{equation}
 q(a_1,a_2,\ldots,a_{k+1}) = 2^k-1 + \sum_{i=1}^{k+1} {\binom{k}{i-1}}    \mod 2,
\end{equation}
which is the desired formula for $k+1$. Proof by induction is therefore complete.

\end{proof}

Having this lemma, we can now proceed with the sketch of the proof of the Proposition \ref{preim001-2d}.
\begin{proof}
Consider first the case of $n=1$, when the only preimages we obtain are of the form
\begin{equation}\label{preimages001}
\begin{matrix} \star \\ \star & 0 \\ \star & 0 & 1 \end{matrix} \hspace{10mm} \text{or} \hspace{10mm}  \begin{matrix} a_1 \\ a_2 & 1\\ a_3 & 1 & 1 \end{matrix}
\end{equation}
where $a_1=a_3$ and $a_2=1+a_3\mod 2$ are given by eq. (\ref{a1}) and (\ref{a2}) for any $a_3 \in \{0,1\}$. One can check that these are indeed the only desired  preimages by
applying $\g$ to all blocks $\T_3$ and verifying that only the blocks of the above form produce $\begin{smallmatrix} 0 &\\0 & 1 \\ \end{smallmatrix}$.
This means that the block $\begin{smallmatrix} 0 &\\0 & 1 \\ \end{smallmatrix}$ can appear in a configuration by
two ways, by moving from the left one unit at a time, or by being created from a configuration shown on the right hand side of (\ref{preimages001}).

Consider a number of steps larger than one, denoted by $n$. A given block may appear in a certain place because it was created in another location, $k$ units to the right, and then moved to the desired place from there in $k$ steps, where $k \leq n$, or because it appeared in this place as a result of the second configuration
of eq. (\ref{preimages001}). 

Now note that the block shown in eq. (\ref{preim001-2d-block}) 
in $i$ iterations will produce block
\begin{equation} \label{blockresult}
 \begin{tabular}{c @{\hspace{2mm}} c @{\hspace{2mm}} c @{\hspace{+2mm}} c @{\hspace{+2mm}} c @{\hspace{2mm}} c @{\hspace{2mm}} c @{\hspace{2mm}} c} 
\\ $?$
\vspace{-0.5mm} \\ $\vdots$ & $\ddots$ 
\vspace{-0.5mm} \\ $\vdots$ & & $?$
\vspace{-0.5mm} \\ $\vdots$ & & $\vdots$ & 0
\vspace{-0.5mm} \\ $?$ & $\hdots$ & $?$ & $ 0$ & 1,
\\ \multicolumn{3}{c}{$\underbrace{\;\;\;\;\;\;\;\;\;\;\;\;\;}_\text{n-i}$}  & 
 \end{tabular}
\end{equation}
where $?$ denotes some value resulting from iteration of the rule.
This is because the triangular block of ones in eq. (\ref{preim001-2d-block}) will shrink until
it becomes a single 1, as shown above, and the column $a_1,a_2,\ldots a_m$ will also shrink until 
it becomes a pair $(u,v)$, where $u=q(a_1,a_2,\ldots, a_{i+1})$ and 
$v=q(a_2,a_3,\ldots,a_{i+2})$.  The reason for this is the fact that  iteration of the rule is
equivalent to iterative application of  $h(x,y)$ to $a_1,a_2,\ldots a_{i+2}$.
Conditions (\ref{a1}) and (\ref{a2})  are equivalent to 
\begin{eqnarray}
1&=& \sum_{j=1}^{i+1}{\binom{i}{j-1}}a_j \,\,\,\mathrm{mod} \,2, \\
1&=& \sum_{j=1}^{i+1}{\binom{i}{j-1}}a_{j+1} \,\,\,\mathrm{mod} \,2,
\end{eqnarray}
and by Lemma \ref{lemmaq} this implies $u=0$, $v=0$, that is, we obtain column of two zeros, as shown in
eq. (\ref{blockresult}).

Iterating block shown in eq. (\ref{blockresult}) further, after $n-i$ iterations,  $\begin{smallmatrix} 0 &\\0 & 1 \\ \end{smallmatrix}$ will move to the left one step at a time.
In the end,  as a result of applying
operator $\g$ $n$-times to block (\ref{preim001-2d-block}),  $\begin{smallmatrix} 0 &\\0 & 1 \\ \end{smallmatrix}$ will be produced. Since $i$ varies from $0$ to $n$ this means that indeed all possibilities
of arriving from some place on the right ($i=1, \ldots n-1$) as well as being ``created in place'' ($i=n$) are covered.
\end{proof}
Using very similar argument, one can prove analogous proposition for block $\begin{smallmatrix} 1 &\\1 & 1 \\ \end{smallmatrix}$.
\begin{prop} \label{preim111-2d}
The set, \(\g^{-n}(\begin{smallmatrix} 1 &\\1 & 1 \\ \end{smallmatrix})\), consists of all blocks in the set \(\bigcup_{i=0}^{n} (A_i \setminus B_i) \cup C\), where, for a fixed value of $i$, 
$A_i$ and $B_i$ are, respectively, the sets of all blocks of the form
\[ \begin{tabular}{c @{\hspace{2mm}} c @{\hspace{2mm}} c @{\hspace{+1mm}} c @{\hspace{+1mm}} c @{\hspace{2mm}} c @{\hspace{2mm}} c @{\hspace{2mm}} c} 
\\ \(\star\)
\vspace{-0.5mm} \\ \(\vdots\) & \(\ddots\) 
\vspace{-0.5mm} \\ \(\vdots\) & & \(\star\)
\vspace{-0.5mm} \\ \(\vdots\) & & \(\vdots\) & \(b_1\) 
\vspace{-0.5mm} \\ \(\vdots\) & & \(\vdots\) & \(b_2\) & 1 
\vspace{-0.5mm} \\ \(\vdots\) & & \(\vdots\) & \(b_3\) & \(\vdots\) & \(\ddots\)
\vspace{-0.5mm}\\ \(\vdots\) & & \(\vdots\) & \(\vdots\) & \(\vdots\) & &\(\ddots\) 
\vspace{-0.5mm} \\ \(\star\) & \(\hdots\) & \(\star\) & \(b_{i+2}\) & 1 & \(\hdots\) & \(\hdots\) & 1,
\\ \multicolumn{3}{c}{\(\underbrace{\;\;\;\;\;\;\;\;\;\;\;\;\;}_\text{n-i}\)}  & & \multicolumn{4}{c}{\(\underbrace{\;\;\;\;\;\;\;\;\;\;\;\;\;\;\;\;\;}_\text{i+1}\)} 
 \end{tabular}  \hspace{20mm}
 \begin{tabular}{c @{\hspace{2mm}} c @{\hspace{2mm}} c @{\hspace{+3mm}} c @{\hspace{+3mm}} c @{\hspace{2mm}} c @{\hspace{2mm}} c @{\hspace{2mm}} c} 
\\ & \\&\\\(\star\) \\ \(\vdots\) & \(\ddots\) \\ \(\vdots\) & & \(\star\) \\ \(\vdots\) & & \(\vdots\) & 1
\\ \(\vdots\) & & \(\vdots\) & \(\vdots\) & \(\ddots\) 
\\ \(\star\) & \(\hdots\) & \(\star\) & 1 & \(\hdots\) & 1,
\\ \multicolumn{3}{c}{\(\underbrace{\;\;\;\;\;\;\;\;\;\;\;}_\text{n-i}\)} &\multicolumn{3}{c}{\(\underbrace{\;\;\;\;\;\;\;\;\;\;}_\text{i+2}\)} 
 \end{tabular} \]
and $C$ is the set whose only element is the block
\[
 \begin{tabular}{c @{\hspace{2mm}} c @{\hspace{2mm}} c} 
\\ 1 \\ \(\vdots\) & \(\ddots\) \\ 1 & \(\hdots\) & 1,
\\ \multicolumn{3}{c}{\(\underbrace{\;\;\;\;\;\;\;\;}_\text{n+2}\)}
 \end{tabular} 
 \]
 where \(i \in \{0 \dots n\}\) and for \(b_3 \ldots b_{i+2}\) we take arbitrary values. If $i=0$, then $(b_1,b_2)=(1,1)$ and if $i>0$, then \(b_1\), \(b_2\) are determined by
 \begin{equation} \label{b1}
 b_1 = \sum_{j=2}^{i+1} \binom{i}{j-1} b_j \mod 2,
 \end{equation}
 \begin{equation} \label{b2}
 b_2 = \sum_{j=2}^{i+1} \binom{i}{j-1} b_{j+1} \mod 2.
 \end{equation}
\end{prop}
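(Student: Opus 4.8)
The plan is to mirror the proof of Proposition~\ref{preim001-2d}, proceeding by induction on $n$ and again exploiting the identity $g(\begin{smallmatrix} x &\\y & 1 \\ \end{smallmatrix})=x+y+1 \mod 2 = h(x,y)$ together with Lemma~\ref{lemmaq}. The one structural fact to record at the outset is that both preimages of $1$ have their right entry equal to $1$, so a $1$ at the next step always forces a $1$ immediately to its right; iterating this, the block $\begin{smallmatrix} 1 &\\1 & 1 \\ \end{smallmatrix}$ must sit atop a solid right-pointing triangle of $1$'s, which is the all-$1$'s core described by $C$. For the base case $n=1$ I would apply $\g$ to every element of $\T_3$: the requirement that all three output cells equal $1$ forces the right column equal to $1$ and the constraint $x=y$ on each of the three neighbourhoods, leaving exactly the two preimages $\begin{smallmatrix} w \\ w & 1 \\ w & 1 & 1\end{smallmatrix}$ with $w\in\{0,1\}$. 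One then checks that these coincide with $(A_1\setminus B_1)\cup C$, while $A_0\setminus B_0=\emptyset$, which already exhibits in miniature the role played by the sets $B_i$.

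For the inductive step I would track what an arbitrary element of $A_i$ does under $i$ applications of $\g$, exactly as in Proposition~\ref{preim001-2d}. The right-hand triangle of $1$'s of width $i+1$ shrinks to a single $1$, and the distinguished column $b_1,b_2,\ldots,b_{i+2}$ evolves by repeated application of $h$, so after $i$ steps it has collapsed to the pair $(u,v)=\big(q(b_1,\ldots,b_{i+1}),\,q(b_2,\ldots,b_{i+2})\big)$. Here the contrast with the $\begin{smallmatrix} 0 &\\0 & 1 \\ \end{smallmatrix}$ case becomes the crux: conditions~(\ref{b1}) and~(\ref{b2}) are equivalent to $\sum_{j=1}^{i+1}\binom{i}{j-1}b_j\equiv 0$ and $\sum_{j=1}^{i+1}\binom{i}{j-1}b_{j+1}\equiv 0 \mod 2$, and since $2^{i}-1$ is odd for $i\ge 1$, Lemma~\ref{lemmaq} now yields $u=v=1$ rather than $u=v=0$. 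Thus after $i$ steps the block has produced a corner $\begin{smallmatrix} 1 &\\1 & 1 \\ \end{smallmatrix}$, which then translates leftward over the $n-i$ columns of $\star$'s during the remaining $n-i$ steps, delivering the target. Letting $i$ run from $0$ to $n$ accounts for every way the target can arise --- created in place, or created further to the right and carried left --- establishing that each block of $A_i$ is a genuine $n$-step preimage.

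The genuinely new ingredient, and the step I expect to require the most care, is the bookkeeping encoded by $A_i\setminus B_i$. The difficulty is that the all-$1$'s triangle is simultaneously compatible with every index $i$: if the free entries $b_3,\ldots,b_{i+2}$ are all equal to $1$, then~(\ref{b1}) and~(\ref{b2}) force $b_1=b_2=1$ as well, so the whole distinguished column becomes $1$ and the $A_i$-block degenerates into a width-$(i+2)$ triangle of $1$'s over the same $\star$-region --- which is exactly the set $B_i$. These degenerate blocks are shared between consecutive indices and ultimately all reduce to the single all-$1$'s triangle, so leaving them inside each $A_i$ would count them many times. Excising $B_i$ from every $A_i$ and adjoining the unique block of $C$ repairs this precisely. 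The completeness half of the argument therefore amounts to showing that the sets $A_i\setminus B_i$ are pairwise disjoint and disjoint from $C$, and that together they exhaust $\g^{-n}(\begin{smallmatrix} 1 &\\1 & 1 \\ \end{smallmatrix})$; the collapse computation via Lemma~\ref{lemmaq} is then routine, and it is this disjointness and exhaustion --- the verification that no preimage is missed and none is double-counted --- that constitutes the real work.
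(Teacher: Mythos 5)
Your outline gets the locally checkable ingredients right: the base case is correct, and so is the key computation that conditions (\ref{b1})--(\ref{b2}) are equivalent to $\sum_{j=1}^{i+1}\binom{i}{j-1}b_j\equiv\sum_{j=1}^{i+1}\binom{i}{j-1}b_{j+1}\equiv 0 \pmod{2}$, whence Lemma~\ref{lemmaq} gives $u=v=2^{i}-1\equiv 1\pmod{2}$ for $i\ge 1$; this sign flip relative to Proposition~\ref{preim001-2d} is exactly what a written-out proof needs (note the paper never writes one: it only asserts the proposition follows by a ``very similar argument''). The genuine gap is the translation step. For the corner $\begin{smallmatrix} 0 \\ 0 & 1 \end{smallmatrix}$, leftward motion is unconditional, because $g(\begin{smallmatrix} x \\ y & 0 \end{smallmatrix})=0$ for all $x,y$: the two zeros recreate themselves one column to the left no matter what the $\star$'s hold, and that is the sole reason arbitrary $\star$'s are harmless in Proposition~\ref{preim001-2d}. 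The corner $\begin{smallmatrix} 1 \\ 1 & 1 \end{smallmatrix}$ has no such property: it moves left only when the column immediately to its left is constant. For instance, $\g$ maps $\begin{smallmatrix} 1 \\ 0 & 1 \\ 1 & 1 & 1 \end{smallmatrix}$ to $\begin{smallmatrix} 0 \\ 0 & 1 \end{smallmatrix}$. Consequently your assertion that after $i$ steps the corner ``translates leftward over the $n-i$ columns of $\star$'s'' is false, and so is the conclusion that ``each block of $A_i$ is a genuine $n$-step preimage'': already for $n=1$ the block $\begin{smallmatrix} 0 \\ 1 & 1 \\ 0 & 1 & 1 \end{smallmatrix}$ lies in $A_0$ and is not a preimage of $\begin{smallmatrix} 1 \\ 1 & 1 \end{smallmatrix}$.

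What saves the proposition---and what a correct proof must invoke---is precisely the defining property of $A_i\setminus B_i$: the column $(b_1,\ldots,b_{i+2})$ is not identically $1$. Under $h$-iteration an all-ones column can arise only from a constant column, so a not-all-ones column that ends at $(1,1)$ must be all zeros at some step $m\le i-1$. From that moment a column of zeros sweeps leftward through the $\star$-region: zeros annihilate whatever sits to their left (again because $g(\begin{smallmatrix} x \\ y & 0 \end{smallmatrix})=0$), while behind the sweep $h(0,0)=1$ turns each column into ones. Hence when the corner is ready to move, its left neighbour column is guaranteed constant and remains so at every later step, so after $n$ steps either the corner has marched to the apex or the whole block has turned into ones; in both cases the image is $\begin{smallmatrix} 1 \\ 1 & 1 \end{smallmatrix}$. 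This also corrects your reading of $B_i$: excising it is not bookkeeping against double counting, it is necessary for \emph{soundness}, since an $A_i$ block with all-ones column is just $\star$'s followed by a larger triangle of ones, and whether such a block is a preimage depends on the $\star$'s (for $n=1$, only $2$ of the $8$ blocks of $B_0=A_0$ are preimages). Your completeness plan is, by contrast, in reasonable shape: the sets $A_i\setminus B_i$ and $C$ are pairwise disjoint because $i+1$ is recoverable as the number of trailing all-ones columns, and exhaustion follows the created-in-place-or-carried-left dichotomy once the sweep argument is in hand. The missing zero-column sweep is the one place where ``very similar to Proposition~\ref{preim001-2d}'' conceals a real difference, and without it the proposal does not establish that the listed blocks are preimages at all.
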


\begin{prop} The number of \(n\)-step preimages of blocks \(\begin{smallmatrix} 0 &\\0 & 1 \\ \end{smallmatrix}\) 
and  \(\begin{smallmatrix} 1 &\\1 & 1 \\ \end{smallmatrix}\) 
is given by
\begin{align}
 \card [\g^{-n} \big(\begin{smallmatrix} 0 &\\0 & 1 \\ \end{smallmatrix} \big)]  &= 2^{\frac{n^2+5n}{2}} \sum_{i=0}^{n} 2^{-\frac{i(i+3)}{2}}  \\
 \text{card}[\g^{-n}\big(\begin{smallmatrix} 1 &\\1 & 1 \\ \end{smallmatrix} \big) ] &= 2^{\frac{n^2+5n}{2}} \Big( 4-3 \sum_{i=0}^{n} 2^{-\frac{i(i+3)}{2}} \Big),
\end{align}

\end{prop}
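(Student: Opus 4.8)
The plan is to read both cardinalities directly off Propositions~\ref{preim001-2d} and~\ref{preim111-2d} by counting the arbitrary cells in each parametrized family of blocks and then collapsing the resulting sums. Everything rests on one elementary observation: a triangular block in $\T_{n+2}$ has $\frac{(n+2)(n+3)}{2}$ cells, so a subfamily in which some cells are prescribed and the remaining $m$ cells are free contributes exactly $2^m$ blocks. For $\card[\g^{-n}(\begin{smallmatrix} 0 &\\0 & 1 \\ \end{smallmatrix})]$ I would fix $i$ and note that Proposition~\ref{preim001-2d} prescribes the solid triangle of $1$'s of size $i+1$ (that is $\frac{(i+1)(i+2)}{2}$ cells) and the two determined entries $a_1,a_2$, while the $i$ entries $a_3,\dots,a_{i+2}$ and the whole $\star$-region are free. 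Hence the number of free cells is $\frac{(n+2)(n+3)}{2}-\frac{(i+1)(i+2)}{2}-2=\frac{n^2+5n}{2}-\frac{i(i+3)}{2}$, so this family has $2^{(n^2+5n)/2-i(i+3)/2}$ elements. The families for distinct $i$ are disjoint (they differ in the size of the bottom-right solid triangle of $1$'s, and one checks that the forced pair $a_1,a_2$ never collapses the column to all $1$'s), so summing over $i\in\{0,\dots,n\}$ and factoring out $2^{(n^2+5n)/2}$ gives the first formula.

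For $\card[\g^{-n}(\begin{smallmatrix} 1 &\\1 & 1 \\ \end{smallmatrix})]$ I would use the decomposition $\bigcup_{i=0}^n(A_i\setminus B_i)\cup C$ from Proposition~\ref{preim111-2d}. The set $A_i$ has exactly the shape of the $001$-family above (with $b$ replacing $a$), so $\card A_i=2^{(n^2+5n)/2-i(i+3)/2}$. The set $B_i$ is the degenerate subfamily of $A_i$ in which $b_3=\dots=b_{i+2}=1$; by~(\ref{b1})--(\ref{b2}) this forces $b_1=b_2=1$, so the column-plus-triangle becomes a solid triangle of $1$'s of size $i+2$. Thus $B_i\subseteq A_i$, only the $\star$-region remains free, and $\card B_i=2^{(n^2+5n)/2-i(i+5)/2}$. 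Writing $M=2^{(n^2+5n)/2}$, $S=\sum_{i=0}^n 2^{-i(i+3)/2}$ and $T=\sum_{i=0}^n 2^{-i(i+5)/2}$, and using $\card C=1$ together with disjointness of the families, the cardinality equals $M(S-T)+1$.

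The remaining step is purely algebraic. I would re-index $T$ via the identity $-\tfrac{i(i+5)}{2}=2-\tfrac{(i+1)(i+4)}{2}$, i.e.\ $2^{-i(i+5)/2}=4\cdot 2^{-(i+1)(i+4)/2}$. Shifting the summation index by one then gives $T=4\bigl(S-1+2^{-(n+1)(n+4)/2}\bigr)$, and since $M\cdot 2^{-(n+1)(n+4)/2}=2^{-2}=\tfrac14$ we obtain $MT=4MS-4M+1$. Substituting this into $M(S-T)+1$ collapses everything to $M(4-3S)=2^{(n^2+5n)/2}\bigl(4-3\sum_{i=0}^n 2^{-i(i+3)/2}\bigr)$, exactly the claimed expression.

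The hard part will not be any single computation but the bookkeeping of disjointness in the second formula: one must verify that the families $A_i\setminus B_i$ are pairwise disjoint (for a non-all-$1$'s preimage the size of its maximal solid bottom-right triangle of $1$'s recovers $i$) and that the unique all-$1$'s block is counted once, by $C$. Indeed it is precisely to stop the all-$1$'s configurations from being tallied under several values of $i$ that the subtraction of $B_i$ is introduced; this asymmetry with the $001$ case is what produces the factor $4-3S$ rather than a bare geometric sum. Once this is in place, the rest follows from the cell-count $2^m$ and the one-line re-indexing above.
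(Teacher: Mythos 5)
Your proposal is correct and follows essentially the same route as the paper: count the free cells in each family ($\frac{n(n+5)}{2}-\frac{i(i+3)}{2}$ for the $A_i$-type blocks, $\frac{n(n+5)}{2}-\frac{i(i+5)}{2}$ for the $B_i$-type blocks), sum over $i$, and collapse the second sum by the same index shift $j=i+1$ to obtain the factor $4-3\sum_{i=0}^{n}2^{-i(i+3)/2}$. The only difference is that you make explicit the bookkeeping the paper leaves implicit (that $B_i\subseteq A_i$, that the families $A_i\setminus B_i$ are pairwise disjoint, and that the all-ones block is counted only by $C$), which is a welcome strengthening rather than a different argument.
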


\begin{proof} For a given value of \(i\), we have \((n+2)+(n+1)+\dots+(i+3) + i= \frac{n(n+5)-i(i+5)}{2} +i\) 
arbitrary values represented by \(\star\) in the preimage of  \(\begin{smallmatrix} 0 &\\0 & 1 \\ \end{smallmatrix}\). Therefore, for each \(i\), there are \(2^{\frac{n(n+5)-i(i+5)}{2}+i} = 2^{\frac{n(n+5)-i(i+3)}{2}}\) possible configurations. Summing over all values of \(i\in \{0 \dots n \}\) gives the desired result.

In preimages of  \(\begin{smallmatrix} 1 &\\1 & 1 \\ \end{smallmatrix}\), for a given $i$ we have
\(2^{\frac{n(n+5)-i(i+3)}{2}}\) possible blocks in set $A_i$, and \(2^{\frac{n(n+5)-i(i+5)}{2}}\) blocks in set $B_i$.
Summing over $i$ and adding one block consisting of all ones, we obtain
\begin{equation}
 \text{card}[\g^{-n}\big(\begin{smallmatrix} 1 &\\1 & 1 \\ \end{smallmatrix} \big) ]
=\sum_{i=0}^n 2^{\frac{n(n+5)-i(i+3)}{2}} - \sum_{i=0}^n 2^{\frac{n(n+5)-i(i+5)}{2}} +1.
\end{equation}
After changing the summation index $i$  in the second sum to $j=i+1$, both sums can be combined together,
and after simplification one obtains
\begin{equation}
\text{card}[\g^{-n}\big(\begin{smallmatrix} 1 &\\1 & 1 \\ \end{smallmatrix} \big) ] = 2^{\frac{n^2+5n}{2}} \Big( 4-3 \sum_{i=0}^{n} 2^{-\frac{i(i+3)}{2}} \Big).
\end{equation}
\end{proof}
Similar analysis can be performed for block  \(\begin{smallmatrix} 1 &\\1 & 0 \\ \end{smallmatrix}\), leading
to the same number of preimages as in the case of  \(\begin{smallmatrix} 1 &\\1 & 1 \\ \end{smallmatrix}\),
\begin{equation}
\text{card}[\g^{-n}\big(\begin{smallmatrix} 1 &\\1 & 0 \\ \end{smallmatrix} \big) ] = 2^{\frac{n^2+5n}{2}} \Big( 4-3 \sum_{i=0}^{n} 2^{-\frac{i(i+3)}{2}} \Big).
\end{equation}
There are two other blocks which have easy to describe preimage sets, as demonstrated in the next two propositions.
\begin{prop}\label{preim011-2d}
 The set $\g^{-n}(\begin{smallmatrix} 0 &\\1 & 1 \\ \end{smallmatrix})$ consists of all blocks of the following form:
\begin{equation} \label{preim011-2d-block}
 \begin{tabular}{ c @{\hspace{+2mm}} c @{\hspace{3mm}} c @{\hspace{3mm}} c @{\hspace{2mm}} c}
\\  $a_{1}$ \vspace{1mm}
 \\  $a_{2}$ & 1 
 \\  $a_{3}$ & $\vdots$ & $\ddots$
\\  $\vdots$ & $\vdots$ & &$\ddots$ 
 \\  $a_{n+2}$ & 1 & $\hdots$ & $\hdots$ & 1
\\  &\multicolumn{4}{c}{$\underbrace{\;\;\;\;\;\;\;\;\;\;\;\;\;\;\;\;\;}_\text{n+1}$} 
 \end{tabular}  
\end{equation}
where for $a_3 \ldots a_{n+2}$ we take arbitrary values. If $n=0$, then $(a_1,a_2)=(0,1)$, while if $n>0$, $a_1,a_2$ are determined
by
\begin{eqnarray}
a_1&=&1+ \sum_{j=2}^{n+1}{\binom{n}{j-1}}a_j \,\,\,\mathrm{mod} \,2, \\
a_2&=& \sum_{j=2}^{n+1}{\binom{n}{j-1}}a_{j+1} \,\,\,\mathrm{mod} \,2. 
\end{eqnarray}
\end{prop}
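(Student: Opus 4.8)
The plan is to follow the proof of Proposition \ref{preim001-2d} almost verbatim, the only substantive change being the target value of the cell that survives in the ``self'' position. As there, the two tools are the observation that $g(\begin{smallmatrix} x &\\ y & 1 \\ \end{smallmatrix})=h(x,y)=x+y+1 \bmod 2$ whenever the right neighbour equals $1$, together with the closed form for iterated $h$ furnished by Lemma \ref{lemmaq}. The essential qualitative difference from Proposition \ref{preim001-2d} is that the block $\begin{smallmatrix} 0 &\\1 & 1 \\ \end{smallmatrix}$ admits \emph{no} ``moving'' contribution: whereas $\begin{smallmatrix} 0 &\\0 & 1 \\ \end{smallmatrix}$ is the signature of an isolated $1$ that drifts one column to the left each step (hence the sum over intermediate positions and the arbitrary $\star$ region), the isolated pattern $\begin{smallmatrix} 0 &\\1 & 1 \\ \end{smallmatrix}$ decays, since $g(\begin{smallmatrix} 0 &\\1 & 1 \\ \end{smallmatrix})=0$. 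It can therefore only be freshly created in place at the final step, which is exactly why the preimage has the single fixed size $n+2$ with no $\star$ entries, corresponding solely to the $i=n$ case of the earlier proof.

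First I would verify the forward direction, that every block of the stated form is an $n$-step preimage. Since each entry of the leftmost column always sits immediately to the left of a $1$ of the triangle, that column evolves purely under $h$; after $n$ applications of $\g$ the triangle of $1$'s shrinks to a single surviving $1$, and the column $(a_1,\dots,a_{n+2})$, read top-to-bottom, collapses under the $q$-process to the pair $u=q(a_1,\dots,a_{n+1})$ (top) and $v=q(a_2,\dots,a_{n+2})$ (bottom). Thus the block produces $\begin{smallmatrix} u &\\ v & 1 \\ \end{smallmatrix}$, so obtaining $\begin{smallmatrix} 0 &\\1 & 1 \\ \end{smallmatrix}$ requires $u=0$ and $v=1$. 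Applying Lemma \ref{lemmaq} with $k=n+1$ and using $2^{n}-1\equiv 1 \bmod 2$ for $n\ge 1$, the equation $u=0$ rearranges to the stated formula for $a_1$, which is identical to (\ref{a1}), while $v=1$ rearranges to $a_2=\sum_{j=2}^{n+1}\binom{n}{j-1}a_{j+1}\bmod 2$. The single sign discrepancy with (\ref{a2})---the absence of the leading $1$---is precisely the arithmetic trace of demanding $v=1$ rather than $v=0$.

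For completeness I would argue by induction that no preimage outside this family exists. The base case $n=0$ is the block $\begin{smallmatrix} 0 &\\1 & 1 \\ \end{smallmatrix}$ itself, so $(a_1,a_2)=(0,1)$; the case $n=1$ is settled by an exhaustive check over $\T_3$, which yields exactly the two blocks indexed by $a_3\in\{0,1\}$ (with $a_2=a_3$, $a_1=1+a_3$). For the inductive step one takes one-step preimages of the $n$-form block, working inward from the bottom-right corner as in Propositions \ref{preim001-1d} and \ref{preim001-2d}: the two adjacent $1$'s, once the cell below-right is pinned to $1$, force the $(1,1,1)$ pattern rather than $(0,0,1)$, so the triangle of $1$'s is compelled to grow by one layer, the leftmost column acquires exactly one additional free entry, and the conditions on $a_1,a_2$ update via Pascal's identity in the same manner as in the proof of Lemma \ref{lemmaq}. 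The main obstacle is this completeness half---in particular, establishing rigorously that the forcing admits no slack that would reintroduce a moving, $\star$-bearing contribution; once that is secured, the identification of the two determined entries $a_1,a_2$ follows immediately from the $q$-collapse computation above.
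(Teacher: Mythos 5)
Your overall strategy matches the paper's intent exactly: the paper omits the proof of this proposition, saying only that it follows the argument of Proposition \ref{preim001-2d}, and your forward direction carries that out correctly. Since every entry of the left column has a $1$ as its right neighbour, the column evolves by $h$; after $n$ steps the block collapses to $\begin{smallmatrix} u \\ v & 1 \end{smallmatrix}$ with $u=q(a_1,\dots,a_{n+1})$, $v=q(a_2,\dots,a_{n+2})$, and Lemma \ref{lemmaq} turns $u=0$, $v=1$ into precisely the two stated congruences, the absent ``$1+$'' in the $a_2$ formula being the arithmetic trace of demanding $v=1$. That half is complete, and your structural observation that no drifting, $\star$-bearing family can occur (because $g(\begin{smallmatrix} 0 \\ 1 & 1 \end{smallmatrix})=0$) is also sound.

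The genuine gap is in the completeness induction, and it sits exactly where you suspected. Your claim that adjacency of $1$'s forces the $(1,1,1)$ pattern ``rather than $(0,0,1)$'' is valid only from the third column of the preimage onward; it fails in the second column. Concretely, let $B\in\T_{n+3}$ satisfy $\g(B)=C$ with $C$ an $n$-form block, and index entries $B_{i,j}$ by column $i$ (from the left) and row $j$ (from the bottom). Propagating the $1$'s of $C$ along each row does force $B_{i,j}=1$ for all $i\geq 3$ inside the triangle, but for the $1$'s of $C$ in its second column one only obtains $(B_{2,j+1},B_{2,j},B_{3,j})\in\{(0,0,1),(1,1,1)\}$, i.e., that the second column of $B$ is \emph{constant} -- all $0$'s or all $1$'s. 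The all-$0$ option is the slack you worried about, and no local forcing removes it; it is removed by the inductive hypothesis on the image. If the second column of $B$ were all $0$'s, then every entry of the left column of $C$ would equal $g(\begin{smallmatrix} x \\ y & 0 \end{smallmatrix})=0$; but an $n$-form block cannot have an all-zero left column, because its defining constraint gives $a_1=1$ when $a_2=\dots=a_{n+2}=0$ (and $(a_1,a_2)=(0,1)$ when $n=0$). Hence the second column of $B$ is all $1$'s, $B$ has the claimed shape with one new free entry, and your $q$-collapse computation then determines $b_1,b_2$, closing the induction. With this one observation supplied, your proof is complete and is the argument the paper intends.
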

\begin{prop}\label{preim101-2d}
 The set $\g^{-n}(\begin{smallmatrix} 1 &\\0 & 1 \\ \end{smallmatrix})$ consists of all blocks of the following form:
\begin{equation} \label{preim101-2d-block}
 \begin{tabular}{ c @{\hspace{+2mm}} c @{\hspace{3mm}} c @{\hspace{3mm}} c @{\hspace{2mm}} c}
\\  $a_{1}$ \vspace{1mm}
 \\  $a_{2}$ & 1 
 \\  $a_{3}$ & $\vdots$ & $\ddots$
\\  $\vdots$ & $\vdots$ & &$\ddots$ 
 \\  $a_{n+2}$ & 1 & $\hdots$ & $\hdots$ & 1
\\  &\multicolumn{4}{c}{$\underbrace{\;\;\;\;\;\;\;\;\;\;\;\;\;\;\;\;\;}_\text{n+1}$} 
 \end{tabular}  
\end{equation}
where for $a_3 \ldots a_{n+2}$ we take arbitrary values. If $n=0$, then $(a_1,a_2)=(1,0)$, while if $n>0$, $a_1,a_2$ are determined
by
\begin{eqnarray}
a_1&=& \sum_{j=2}^{n+1}{\binom{n}{j-1}}a_j \,\,\,\mathrm{mod} \,2, \\
a_2&=&1+ \sum_{j=2}^{n+1}{\binom{n}{j-1}}a_{j+1} \,\,\,\mathrm{mod} \,2. 
\end{eqnarray}
\end{prop}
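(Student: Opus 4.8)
The plan is to reuse the machinery of Proposition~\ref{preim001-2d} almost unchanged, since the claimed preimage set has exactly the triangular shape of Proposition~\ref{preim011-2d} and differs from it only in the two parity conditions imposed at the end. Throughout I rely on the two facts already recorded before Proposition~\ref{preim001-2d}: that $g\big(\begin{smallmatrix} x &\\ y & 1 \end{smallmatrix}\big)=h(x,y)$, so a cell whose right neighbour equals $1$ evolves by $h$, and that iterating $h$ down a column is precisely the $q$-procedure of Lemma~\ref{lemmaq}. I would first settle the base case $n=0$ directly: $\g^{0}$ is the identity, whose only preimage of $\begin{smallmatrix} 1 &\\0 & 1 \end{smallmatrix}$ is that block itself, which is (\ref{preim101-2d-block}) with $(a_1,a_2)=(1,0)$ and a size-$1$ triangle of ones. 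This case has to be separated out because at $n=0$ the sums in the formulas for $a_1,a_2$ are empty and the constant $2^{k-1}-1$ of Lemma~\ref{lemmaq} equals $0$ rather than $1$.

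For the forward direction with $n\ge1$ I would show that any block of the form (\ref{preim101-2d-block}) reaches $\begin{smallmatrix} 1 &\\0 & 1 \end{smallmatrix}$ in $n$ steps. The $\T_{n+1}$ triangle of ones is invariant under $\g$ except that it loses one layer per step, because $g\big(\begin{smallmatrix} 1 &\\ 1 & 1 \end{smallmatrix}\big)=1$; after $n$ steps it has shrunk to the single $1$ that becomes the bottom-right entry of the final $\T_2$ block. At every intermediate stage the left column sits immediately to the left of these ones, so it evolves purely by $h$ and, as in Proposition~\ref{preim001-2d}, collapses after $n$ steps to the pair $q(a_1,\dots,a_{n+1})$ (top) and $q(a_2,\dots,a_{n+2})$ (bottom). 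Matching the target $\begin{smallmatrix} 1 &\\0 & 1 \end{smallmatrix}$ is therefore the pair of conditions $q(a_1,\dots,a_{n+1})=1$ and $q(a_2,\dots,a_{n+2})=0$. Applying Lemma~\ref{lemmaq} with $k=n+1$ and using $2^{n}-1\equiv1\pmod{2}$, these become $\sum_{j=1}^{n+1}\binom{n}{j-1}a_j\equiv0$ and $\sum_{j=1}^{n+1}\binom{n}{j-1}a_{j+1}\equiv1\pmod{2}$; isolating the $j=1$ term, where $\binom{n}{0}=1$, gives precisely the stated formulas. The asymmetry with Proposition~\ref{preim011-2d} is now transparent: requiring the apex to be $1$ is what suppresses the leading $1+$ in the formula for $a_1$, while requiring the bottom-left cell to be $0$ is what retains it in the formula for $a_2$.

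The genuinely delicate part is completeness, i.e.\ that (\ref{preim101-2d-block}) captures \emph{every} element of $\g^{-n}(\begin{smallmatrix} 1 &\\0 & 1 \end{smallmatrix})$, and I would prove it by induction on $n$. Given $B\in\g^{-(n+1)}(\begin{smallmatrix} 1 &\\0 & 1 \end{smallmatrix})$, the block $D=\g(B)$ lies in $\g^{-n}(\begin{smallmatrix} 1 &\\0 & 1 \end{smallmatrix})$ and so, by the induction hypothesis, has the form (\ref{preim101-2d-block}) with parameter $n$; it remains to invert this single step. The ones of the triangle of $D$ force, cell by cell, that $B_{i,j}=1$ for every column $i\ge3$ and that the entries of column $i=2$ are all equal to a common value $\beta$. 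The one step that is not automatic, and that I expect to be the main obstacle, is ruling out $\beta=0$: if $\beta=0$ held, then every cell of the left column of $D$ would be forced to $0$ (a cell with right neighbour $0$ can never equal $1$), whence $q(d_1,\dots,d_{n+1})$ and $q(d_2,\dots,d_{n+2})$ would both equal $q(0,\dots,0)$ and in particular coincide, contradicting that $D\in\g^{-n}(\begin{smallmatrix} 1 &\\0 & 1 \end{smallmatrix})$ forces them to be $1$ and $0$ respectively. Hence $\beta=1$, the triangle of $B$ is all ones, and the left column of $B$ is recovered from that of $D$ through the invertible $h$-recurrence with exactly one free choice, the newly exposed bottom entry $a_{n+3}$. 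This reproduces precisely the $(n+1)$-parameter family of (\ref{preim101-2d-block}) and doubles the count to $2^{n+1}$, closing the induction. In contrast to $\begin{smallmatrix} 0 &\\0 & 1 \end{smallmatrix}$ in Proposition~\ref{preim001-2d}, the block $\begin{smallmatrix} 1 &\\0 & 1 \end{smallmatrix}$ maps to $0$ and so cannot drift in from the right; this is exactly why no summation index $i$ appears and why the forcing of $\beta$ leaves a single family.
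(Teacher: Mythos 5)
Your proof is correct, and it fills in something the paper actually leaves blank: the paper gives no proof of this proposition, saying only that one ``can be constructed using very similar argument as in Proposition \ref{preim001-2d}.'' Your forward direction is precisely that argument: the size-$(n+1)$ triangle of ones loses one layer per step, the left column evolves by $h$ because all its right neighbours are $1$, and Lemma \ref{lemmaq} turns the two conditions $q(a_1,\ldots,a_{n+1})=1$ and $q(a_2,\ldots,a_{n+2})=0$ into exactly the stated binomial formulas (your parity bookkeeping is right, including the $n=0$ anomaly where $2^{k-1}-1$ is even, which is why $(a_1,a_2)=(1,0)$ must be listed separately). Where you genuinely depart from the paper's template is completeness. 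The proof of Proposition \ref{preim001-2d} argues completeness by a forward enumeration of how the target can arise (``created in place'' versus ``drifted in from the right''), stated quite informally; you instead run a backward induction that inverts a single step of $\g$ explicitly: the ones of $D=\g(B)$ force columns $i\ge 3$ of $B$ to be ones and column $2$ to be a constant $\beta$, and $\beta=0$ is excluded because it would make the left column of $D$ identically zero, contradicting $q(d_1,\ldots,d_{n+1})=1\ne 0=q(d_2,\ldots,d_{n+2})$; then the $h$-recurrence on the left column is inverted with exactly one free choice ($a_{n+3}$), and the $q$-constraints transfer upward since the first stage of the $q$-procedure on $B$'s column is $D$'s column. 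This yields exactly the $2^{n+1}$ claimed blocks and nothing else. What this buys is a fully rigorous completeness step, and it also isolates the structural reason that, unlike in Proposition \ref{preim001-2d}, no summation index $i$ appears here: for $\begin{smallmatrix} 1 &\\0 & 1 \\ \end{smallmatrix}$ there is no drift family, only the in-place one (your one-line heuristic for this is loose, but it is not load-bearing, since the induction itself establishes it).
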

Proofs of these propositions can be constructed using very similar argument as in Proposition \ref{preim001-2d}, 
and will thus be omitted.  Note that in each case, there are exactly $n$ arbitrary values in preimages, so there are $2^n$ of them,
and therefore we have
\begin{equation}  
  \text{card}[\g^{-n} \big(\begin{smallmatrix} 0 &\\1 & 1 \\ \end{smallmatrix} \big) ] = \text{card}[\g^{-n} \big(\begin{smallmatrix} 1 &\\0 & 1 \\ \end{smallmatrix} \big) ] = 2^n. 
\end{equation}

We now have enough information to obtain the number of preimages of each of the remaining basic blocks, since, similarly as in one dimensional case, 
block probabilities are interrelated via consistency conditions. 
\begin{align}
P_{n-1}(\begin{smallmatrix} 0 &\\0 & 0 \\ \end{smallmatrix}) + P_{n-1}(\begin{smallmatrix} 0 &\\0 & 1 \\ \end{smallmatrix}) + P_{n-1}(\begin{smallmatrix} 0 &\\1 & 0 \\ \end{smallmatrix}) + P_{n-1}(\begin{smallmatrix} 0 &\\1 & 1 \\ \end{smallmatrix}) &= P_{n}(0), \\
 P_{n-1}(\begin{smallmatrix} 0 &\\0 & 0 \\ \end{smallmatrix}) + P_{n-1}(\begin{smallmatrix} 0 &\\1 & 0 \\ \end{smallmatrix}) + P_{n-1}(\begin{smallmatrix} 1 &\\0 & 0 \\ \end{smallmatrix}) + P_{n-1}(\begin{smallmatrix} 1 &\\1 & 0 \\ \end{smallmatrix}) &= P_{n}(0), \\
P_{n-1}(\begin{smallmatrix} 0 &\\0 & 0 \\ \end{smallmatrix}) + P_{n-1}(\begin{smallmatrix} 0 &\\0 & 1 \\ \end{smallmatrix}) + P_{n-1}(\begin{smallmatrix} 1 &\\0 & 0 \\ \end{smallmatrix}) + P_{n-1}(\begin{smallmatrix} 1 &\\0 & 1 \\ \end{smallmatrix}) &= P_{n}(0).
\end{align}
Results are summarized in Table~2, where numbers of preimages  for all eight basic blocks are shown.

\begin{table}[h]
\centering
\begin{tabular}{|| c | c || c | c ||}
\hline
\(b\) & \(\text{card}[ \g^{-n}( b)]\) & \(b\) & \(\text{card}[ \g^{-n}( b)]\) \\
 \hline
 \(\begin{matrix} 0 &\\0 & 0 \\ \end{matrix}\) & \(2^{\frac{n^2+5n}{2}} \Big(8-3 \sum_{i=0}^{n} 2^{-\frac{i(i+3)}{2}} \Big) -4\cdot2^n\) & \(\begin{matrix} 1 &\\0 & 0 \\ \end{matrix}\) & \( 2^{\frac{n^2+5n}{2}} \Big(-4+4 \sum_{i=0}^{n} 2^{-\frac{i(i+3)}{2}}\Big)+2^n \)\\ \hline
 
 \(\begin{matrix}  0 &\\0 & 1 \\ \end{matrix}\) & \( 2^{\frac{n^2+5n}{2}} \Big(\sum_{i=0}^{n} 2^{-\frac{i(i+3)}{2}} \Big)\) & \(\begin{matrix} 1 &\\0 & 1 \\ \end{matrix}\) & \( 2^n \)\\ \hline

\(\begin{matrix} 0 &\\1 & 0 \\ \end{matrix}\) & \( 2^{\frac{n^2+5n}{2}} \Big(-4+4 \sum_{i=0}^{n} 2^{-\frac{i(i+3)}{2}}\Big) + 2^n \) & \(\begin{matrix} 1 &\\1 & 0 \\ \end{matrix}\) & \( 2^{\frac{n^2+5n}{2}} \Big( 4-3 \sum_{i=0}^{n} 2^{-\frac{i(i+3)}{2}} \Big) \) \\ \hline

 \(\begin{matrix}0 &\\1 & 1 \\ \end{matrix}\) & \( 2^n \) & \(\begin{matrix}1 &\\1 & 1 \\ \end{matrix}\) & \( 2^{\frac{n^2+5n}{2}} \Big( 4-3 \sum_{i=0}^{n} 2^{-\frac{i(i+3)}{2}} \Big) \)  \\ 
\hline \hline
\end{tabular}
\caption[]{Number of preimages of basic blocks for two-dimensional rule 130.}
\end{table}

\section*{Dependence on the initial density in two dimensions}
Obviously, for any triangular block $b$ we have
\begin{equation}
 P_n(b)=\sum_{a \in \g^{-n}(b)} P_0(a).
\end{equation}
Using Proposition \ref{preim001-2d}
one can therefore compute $P_{n}(\begin{smallmatrix} 0 &\\0 & 1 \\ \end{smallmatrix})$ as follows:
\begin{equation}
 P_{n}(\begin{smallmatrix} 0 &\\0 & 1 \\ \end{smallmatrix})=
\sum_{i=0}^{n}
\sum_{a_3\ldots a_{i+2} \in \{0,1\}}  P_{0}\left(\begin{smallmatrix} a_1\\a_2 \end{smallmatrix}\right)
P_{0}\left(\begin{smallmatrix} a_3\\a_4\\ \vdots\\ a_{i+2}\end{smallmatrix}\right)
P_0
\left(  \underbrace{ 
 \begin{smallmatrix}
 1 \\ \vdots & \ddots \\ 1 & \hdots & 1
 \end{smallmatrix}  }_{i+1} \right),
\end{equation}
where $a_1,a_2$ are given by eq. (\ref{a1}) and (\ref{a2}). Similarly, Proposition~\ref{preim111-2d} yields

\begin{equation}
 P_{n}(\begin{smallmatrix} 1 &\\1 & 1 \\ \end{smallmatrix})=
\sum_{i=0}^{n}
\sum_{b_3\ldots b_{i+2} \in \{0,1\}}  P_{0}\left(\begin{smallmatrix} b_1\\b_2 \end{smallmatrix}\right)
P_{0}\left(\begin{smallmatrix} b_3\\b_4\\ \vdots\\ b_{i+2}\end{smallmatrix}\right)
P_0
\left(  \underbrace{ 
 \begin{smallmatrix}
 1 \\ \vdots & \ddots \\ 1 & \hdots & 1
 \end{smallmatrix}  }_{i+1} \right)
- \sum_{i=0}^{n} P_0
\left(  \underbrace{ 
 \begin{smallmatrix}
 1 \\ \vdots & \ddots \\ 1 & \hdots & 1
 \end{smallmatrix}  }_{i+2} \right)
+  P_0
\left(  \underbrace{ 
 \begin{smallmatrix}
 1 \\ \vdots & \ddots \\ 1 & \hdots & 1
 \end{smallmatrix}  }_{n+2} \right),
\end{equation}
where $b_1,b_2$ are given by eq. (\ref{b1}) and (\ref{b2}). Now, since $P_n(1)=
P_{n-1}(\begin{smallmatrix} 1 &\\1 & 1 \\ \end{smallmatrix}) +P_{n-1}(\begin{smallmatrix} 1 &\\0 & 0 \\ \end{smallmatrix})$,
and 
\begin{equation}
  P_0
\left(  \underbrace{ 
 \begin{smallmatrix}
 1 \\ \vdots & \ddots \\ 1 & \hdots & 1
 \end{smallmatrix}  }_{n} \right) =\rho^{n(n+1)/2},
\end{equation}
we obtain
\begin{align} \nonumber
 P_{n}(1)&=
\sum_{i=0}^{n-1}
\sum_{a_3\ldots a_{i+2} \in \{0,1\}}  P_{0}\left(\begin{smallmatrix} a_1\\a_2 \end{smallmatrix}\right)
P_{0}\left(\begin{smallmatrix} a_3\\a_4\\ \vdots\\ a_{i+2}\end{smallmatrix}\right)
\rho^{(i+1)(i+2)/2}
\\
&+\sum_{i=0}^{n-1}
\sum_{b_3\ldots b_{i+2} \in \{0,1\}}  P_{0}\left(\begin{smallmatrix} b_1\\b_2 \end{smallmatrix}\right)
P_{0}\left(\begin{smallmatrix} b_3\\b_4\\ \vdots\\ b_{i+2}\end{smallmatrix}\right)
\rho^{(i+1)(i+2)/2}
- \sum_{i=0}^{n-1} \rho^{(i+2)(i+3)/2}
+\rho^{(n+1)(n+2)/2},  \label{densityresponse2d}
\end{align}
where, again, $a_1,a_2,b_1$, and $b_2$ are determined by eq. (\ref{a1}), (\ref{a2}), (\ref{b1}) and (\ref{b2}). The above
is an exact equation of the response curve, although, unfortunately, it is not possible to calculate the double sums in a closed form. Nevertheless, for a given $n$, we can calculate and plot $c_n$ versus $\rho$, providing that $n$ is not too large. This has been done for $n=12$, as shown in Figure~\ref{respurvefig2d}. Again, there is excellent agreement between the theoretical curve representing infinite lattice and computer simulations done on a finite lattice. 
\begin{figure}
\begin{center}
\includegraphics[scale=0.4]{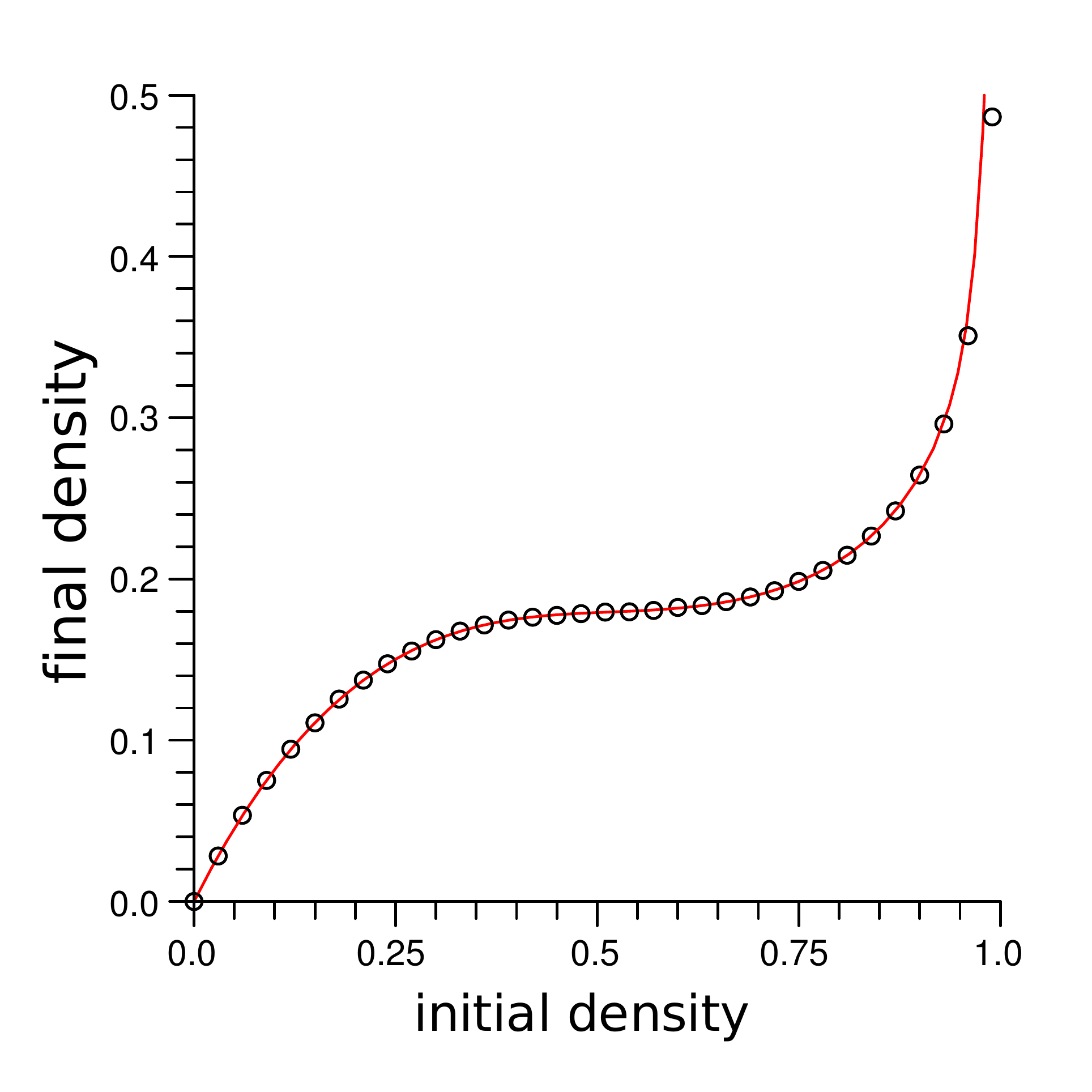}	
\end{center}
\caption{Plots of the response curve, i.e., dependence of $c_n$ on $\rho$ for two-dimensional rule 130. Circles  correspond to computer simulations, while the continuous line represents theoretical curve for $n=12$.}\label{respurvefig2d}
\end{figure}

It is also possible to compute $P_n(1)$ for a special case of $\rho=1/2$. In this case $P_{0}\left(\begin{smallmatrix} a_1\\a_2 \end{smallmatrix}\right)
=1/4$ for any $a_1, a_2$, hence
\begin{equation}
 P_n(1)=2\sum_{i=0}^{n-1} \frac{1}{4} 2^{-(i+1)(i+2)/2} - \sum_{i=0}^{n-1} 2^{-(i+2)(i+3)/2}
+2^{-(n+1)(n+2)/2},
\end{equation}
which simplifies to
\begin{equation} \label{dens2d}
P_n(1)= \frac{1}{2}-\frac{1}{4}\sum_{i=0}^{n-1} 2^{-\frac{i(i+3)}{2}}.
\end{equation}
The asymptotic density is then given by
\begin{equation}
c_\infty = \lim_{n \to \infty} P_n(1) = \frac{1}{2}-\frac{1}{4}\sum_{i=0}^{\infty} 2^{-\frac{i(i+3)}{2}},
\end{equation}
where, again, it is not possible to compute the sum in a closed form. We can, however,  approximate the infinite sum by taking the finite
number of terms and use the integral bounds for the remainder,
\begin{equation} \int_{k+1}^{\infty} 2^{-\frac{x(x+3)}{2}} \; dx \leq  \sum_{i=k}^{\infty} 2^{-\frac{i(i+3)}{2}} \leq \int_{k}^{\infty} 2^{-\frac{x(x+3)}{2}} \; dx. \end{equation}
The above integrals can be expressed in terms of the error function, hence for any positive integer $k$ we obtain
\begin{equation} 
\sum_{i=0}^{k-1} 2^{-\frac{i(i+3)}{2}}+
\frac{\sqrt{\pi} 2^{1/8} }{a} \big(1-\text{erf}(ak+\tfrac{5}{2} a)\big) \leq  
\sum_{i=0}^{\infty} 2^{-\frac{i(i+3)}{2}} \leq 
\sum_{i=0}^{k-1} 2^{-\frac{i(i+3)}{2}}+
\frac{\sqrt{\pi} 2^{1/8} }{a} \big(1-\text{erf}(ak+\tfrac{3}{2} a)\big), \end{equation}
where $\text{erf}(x)=\frac{2}{\sqrt{\pi}} \int_{0}^{x} e^{-t^2} dt$, and $a=\tfrac{\sqrt{2 \ln{2} }}{2}$.
The sum converges very fast, and, for example, using the above inequalities for  $k=5$ we obtain 
\begin{equation}
 0.1791839087\ldots \leq  c_\infty \leq 0.1791839597\ldots ,\end{equation}
that is, $c_\infty$ with accuracy of seven digits after the decimal point.

\section*{Conclusions and future work}
We demonstrated that response curves are calculable for simple rules such as rule 130. We obtained
exact formulae for response curves for this rule in one (eq. \ref{densityesponse1d}) and two dimensions
(eq.  \ref{densityresponse2d}). Techniques presented here are applicable to a fairly large class of CA rules -- in
case of elementary rules, we conjecture that in about 70\%  of cases density response could be calculated rigorously.
This conjecture is based on numerical experiments and computerized searches for patterns in preimage strings.
These patterns and regularities are often obvious, in other cases they can be discovered by 
constructing minimal finite state machines describing preimage sets or by using heuristic methods.
Once the pattern is discovered, one needs to construct a formal proof that the pattern exists
in all levels of preimages, and then perform calculations of densities. This task must obviously be
tailored to each particular rule, and we currently do not have any general method for doing this.
It seems, however, that rules which belong to Wolfram class I or class  II are more likely to possess 
recognizable patterns in their preimage trees, and thus their density curves are often
computable. 

Response curves clearly deserve further study, and it would be worthwhile to systematically study them for
a large number of CA rules, especially in the context of applications of CA to solving computational problems.
Work in this direction is ongoing, and will be reported elsewhere.

\section*{Acknowledgments}
One of the authors (HF) acknowledges financial support from the Natural
Sciences and Engineering Research Council of Canada (NSERC) in the
form of Discovery Grant. We also wish to thank Shared Hierarchical Academic Research Computing Network
(SHARCNET) for granting us access to high-performance computing facilities and for technical support.
Finally, we would like to thank to anonymous referees for insightful comments which helped to
improve this paper significantly.

\end{document}